\numberwithin{equation}{section}
\numberwithin{figure}{section}
  \theoremstyle{definition}
  \newtheorem{defn}{\protect\definitionname}
  \theoremstyle{remark}
  \newtheorem{rem}{\protect\remarkname}
 \theoremstyle{definition}
  \newtheorem{example}{\protect\examplename}
  \theoremstyle{plain}
  \newtheorem{prop}{\protect\propositionname}
    \theoremstyle{plain}
  \newtheorem{lemma}{\protect\lemmaname}
  \theoremstyle{plain}
  \newtheorem{cor}{\protect\corollaryname}
\theoremstyle{plain}
  \newtheorem{claim}{\protect\claimname}
\newenvironment{lyxlist}[1]
{\begin{list}{}
{\settowidth{\labelwidth}{#1}
 \setlength{\leftmargin}{\labelwidth}
 \addtolength{\leftmargin}{\labelsep}
 }}
{\end{list}}
\theoremstyle{plain}
\newtheorem{thm}{\protect\theoremname}
  \providecommand{\definitionname}{Definition}
  \providecommand{\examplename}{Example}
  \providecommand{\lemmaname}{Lemma}
  \providecommand{\propositionname}{Proposition}
  \providecommand{\remarkname}{Remark}
  \providecommand{\corollaryname}{Corollary}
  \providecommand{\claimname}{Claim}
\providecommand{\theoremname}{Theorem}
\begin{document}
\definecolor{cite-blue}{RGB}{0,0,204}
\title{Hiring from a pool of workers}

\date{December 2020}

\author{Azar Abizada}

\author{In\'{a}cio B\'{o}}
\begin{abstract}
In many countries and institutions around the world, the hiring of workers is made through open competitions. In them, candidates take tests and are ranked based on scores in exams and other predetermined criteria. Those who satisfy some eligibility criteria are made available for hiring from a ``pool of workers.'' In each of an ex-ante unknown number of rounds, vacancies are announced, and workers are then hired from that pool. When the scores are the only criterion for selection, the procedure satisfies desired fairness and independence properties. We show that when affirmative action policies are introduced, the established methods of reserves and procedures used in Brazil, France, and Australia, fail to satisfy those properties. We then present a new rule, which we show to be the unique rule that extends static notions of fairness to problems with multiple rounds while satisfying aggregation independence, a consistency requirement. Finally, we show that if multiple institutions hire workers from a single pool, even minor consistency requirements are incompatible with variations in the institutions' rules.\\
\textit{JEL Classification: C78, J45, L38, D73}\\
\textit{Keywords: public organizations, hiring, affirmative action.}
\end{abstract}

\thanks{The authors thank Oguz Afacan, Samson Alva, Julien Combe, Janaína Gonçalves, Rosalia
Greco, Rustamdjan Hakimov, Philipp Heller, Sinan Karadayi, Morimitso
Kurino, Alexander Nesterov, Thibaud Pierrot, Bertan Turhan, Chiu
Yu Ko, and two anonymous referees for helpful comments. Research assistance by Florian Wiek is
much appreciated. Financial support from the Leibniz Association through the SAW project 
MADEP (Bó) is gratefully acknowledged.}

\address{Abizada: School of Business, ADA University, 11 Ahmadbay Aghaoglu St., Baku
AZ1008, Azerbaijan. Email:\href{mailto:aabizada@ada.edu.az}{aabizada@ada.edu.az}.}

\address{Bó (Corresponding author): University of York, Department of Economics and Related Studies, York, United Kingdom; website: \protect\url{http://www.inaciobo.com}; e-mail:
inacio.lanaribo@york.ac.uk.}
\maketitle
\newpage

\section{Introduction\label{sec:Introduction}}

While most companies are free to use almost any criteria to decide which workers to hire and when, that is not the case in many governments
and institutions around the world. To reduce the agency problems
of government institutions and increase the transparency of the hiring
process, those institutions have to follow clear and strict criteria
for selecting workers. In particular, when the number of workers hired
is large (such as police officers, tax agents, etc.), the selection
procedure may consist of several steps, such as written exams, physical
and psychological tests, interviews, and so on, which may also be time
consuming. Due to the high costs of executing such selection procedures,
these hirings often occur in two phases: the evaluation phase,
in which workers apply for the job and take part in the above-mentioned
tests and exams, and the second phase, in which the institutions select,
over time and on a need basis, workers from the ``pool'' of workers
who took part in the first phase. After a certain period,
the pool of workers is renewed, with new ones coming through a new
evaluation phase. As described by the Public Service Commission of
the New South Wales government:
\begin{quotation}
``A talent pool is a group of suitable candidates (whether or not
existing Public Service employees) who have been assessed against
capabilities at certain levels. (...) Using a talent pool enables
you to source a candidate without advertising every time a vacancy
occurs. You can either directly appoint from the pool without further
assessment (for example, to fill a shorter-term vacancy), or conduct
a capability-based behavioral interview with one or more candidates
from the pool to ensure a fit with organizational, team and role requirements
(or additional assessment for agency, role specific or specialist
requirements \textendash{} this is recommended for longer term or
ongoing vacancy). This considerably reduces the time and costs associated
with advertising.''\footnote{Source: Public Service Commission of the New South Wales (http://www.psc.nsw.gov.au/employmentportal/recruitment/recruitment/guide/planning/talent-pools)}
\end{quotation}
The main characteristics of these procedures, which will be essential
to our analysis, are that (i) the selection of workers to hire at
any time, follows a well-defined rule, which is a systematic way of
selecting workers to fill a specified number of positions, (ii) workers
are hired in\emph{ rounds}, on a need-basis, and must be selected
from the pool of workers who took part in the evaluation phase, and
(iii) the institutions do not necessarily know ex-ante the number
of workers they will hire during the pool's validity. Therefore, in general, not all workers in the pool will be hired.
This aspect is emphasized in the description of the selection process
used for all personnel hiring in the European Union institutions:
\begin{quotation}
``The list is then sent to the EU institutions, which are responsible
for recruiting successful candidates from the list. \textbf{Being
included on a reserve list does not mean you have any right or guarantee
of recruitment.}'' {[}emphasis from the original article{]}\cite{euPersonnel}
\end{quotation}
Notice that the quote above refers to a ``reserve list''. A reserve list is a group of candidates who are not hired initially but may (or may not) be hired later. Procedures that mention reserve lists are equivalent to those with a pool of workers: the pool consists of the first set of workers hired together with the reserve list.

A vast number of hirings occur around the world following this type of procedure. Most developed countries use them when hiring public sector workers to the best of our knowledge. Below we provide three examples, which are informative about the number of jobs involved.

All hirings for the U.K. Civil Service occur using open competitions which result in a ``order of merit list'', with a reserve list valid for 12 months.\footnote{Source: Civil Service Commission. 2018. “Recruitment Principles.”} In 2019, there were 445,480 civil servants in the U.K., with 44,570 of them being hired in 2018.\footnote{Source: Civil Service Statistics 2019, Cabinet Office National Statistics} 

In Brazil, the federal constitution mandates that the hiring of public sector workers in all government levels (Federal, State, and Municipal), and state-owned companies, are made through open competition. It moreover states that their results are ``valid for two years'', and that the workers with a non-expired competition result have priority over those with later results.\footnote{Source: Brazilian Federal Constitution (1988), Article 37. In practice, this article implies that after a public competition, workers who ``pass'' the competition are put on hold and might be hired for two years. Those who are not hired have to reapply to a new open competition to be considered.} In 2017, there were more than 11.37 million public sector workers in Brazil.\footnote{Source: Atlas do Estado Brasileiro, IPEA}

In France, most public sector workers' hiring is also made through annual open competitions (\textit{concours}). These result in an order of merit list and must also include a ``complementary list'', with a number of candidates that is at most 200\% of the number of positions hired in the first round.\footnote{Décret n°2003-532 du 18 juin 2003 relatif à l'établissement et à l'utilisation des listes complémentaires d'admission aux concours d'accès aux corps de la fonction publique de l'Etat} In 2018, there were 5.48 million public sector workers in France. In 2016, 40,209 workers at the federal level were hired using these procedures.\footnote{Source: Ministère de l’action et des comptes publics}

Very often, the rules used for hiring workers involve scores in the
selection process. This is not uncommon: the criteria that
are used mostly consist of a weighted average of performance points in multiple
dimensions, such as written exam results, education level, etc.\footnote{Real-life examples of selection rules based on the ranking of workers
are the selection of policemen in Berlin and public sector workers in
Brazil and France.} When the workers' scores constitutes the sole element for determining
which workers to hire, a very natural rule, namely \emph{sequential
priority}, is commonly used: if $q$ workers are to be hired, hire
the $q$ workers with the highest scores among those who remain in the pool.
This rule is simple but has many desirable characteristics. First,
it is fair in the sense that every worker who is not (yet) hired has
a lower score than those who were hired. This adds a vital element
of transparency to the process: if the worker can see, as is often
the case, the scores of those who were hired (or at least the lowest
score among those who were hired), then she has a clear understanding
of why she was not hired. Secondly, it responds to the agency problem:
an institution cannot arbitrarily select low-scoring workers before
selecting all those who have a score higher than that worker. Finally,
the selected workers' quality and identity do not depend on
the number of rounds and vacancies in each round. That is, selecting
20 workers in four rounds with five workers in each results in
selecting the same workers as if 20 workers were selected at once.
We denote this last property by \emph{aggregation independence}. One implication
of this requirement is that the set of selected workers is independent
of the number of rounds and vacancies in each round: selecting 10
workers in two rounds of five workers in each results in the same
selection as selecting two workers in each of five rounds.

While sequential priority satisfies those desirable properties, the criteria used for hiring workers often combine scores with other compositional objectives, in the form of a desired proportion of workers belonging to some subset of the population, such as ethnic minorities, people with disabilities, or women. In section \ref{sec:CompositionalObjectives} we formalize these objectives, noting that these cannot be achieved by the sequential priority rule even if scores are designed to incorporate them, and show that minority reserves, which is arguably the best method for implementing these objectives in static problems, does not satisfy desirable properties when used multiple times over a single pool of workers.

In section \ref{sec:SequentialAdjustedMinorReserves} we present our main contribution, which is a new rule for hiring workers, that is the unique rule that satisfies natural concepts of fairness for this family of problems. We also show that it is essentially the only rule that extends static notions of fairness with compositional objectives to a problem with sequential hirings while being aggregation independent.

In section \ref{sec:HiringRulesAroundTheWorld}, we evaluate rules used in real-life applications in different parts of the world, combining scores with compositional objectives. These include ``quotas'' for individuals with physical or mental disabilities in public sector jobs in France, for black workers in public sector workers in Brazil, and the gender-balanced hiring of firefighters in the Australian province of New South Wales. We show that these fail most of the time to satisfy natural concepts of fairness and aggregation independence. 

Finally, in section \ref{sec:Multiple-institutions} we consider the cases where there are multiple institutions (or locations, departments, etc.) hiring from a single pool of workers. While this scenario is widespread, our main result shows that a mild requirement, saying that the order in which firms hire workers should not change whether some of the institutions hire a worker, essentially leaves us with a single rule, which says that all institutions must hire workers following a single common priority over them. 

Other than the sections described above, the rest of the paper is organized as follows. In section \ref{sec:Hiring-by-rules} we introduce the basic model of hiring by rules and justify the desirability of aggregation independence. In section \ref{sec:Ranking-based-rules} we restrict our focus to rules that are based on scores associated with each worker, in section \ref{sec:SingleRoundHiring} we evaluate the properties of the rules evaluated when they are used for a single round of hiring, and in section \ref{sec:Conclusion} we conclude. Proofs and formal descriptions of the rules absent from the main text are found in the appendix.

\subsection{Related literature}

The structure and functioning of the hiring process for public sector workers have many elements that makes it a clear target for market design: salaries and terms of employment are often not negotiable, the criteria for deciding who should be hired are exogenously given (or ``designed'') and there is a clear concern with issues of fairness and transparency. This paper is, to the extent of our knowledge, the first to evaluate from a theoretical perspective this type of hiring that occurs in the public sector, in which workers are sequentially hired following a predetermined criterion.

There are a few branches of the literature, however, that are related to our analysis. First, the description and analysis of methods for hiring public sector workers around the world and the incentives involved. \cite{sundell_are_2014} evaluates to what extent the use of examinations constitutes a meritocratic method for recruiting in the public sector. The author observes that exams may not be the most adequate way to identify fitness for each function, but that the patronage risk involved when using more subjective criteria such as interviews and CV screening often overcomes those losses. In fact, in an empirical analysis in different ministries in the Brazilian federal government, \cite{bugarin_incentivos_2016} found a positive relationship between corruption cases and the proportion of employees hired by using subjective criteria. 

The property of aggregation independence, which we propose is important for this problem, is related to consistency \citep{thomson1990consistency,tadenuma1991no,thomson1994consistent} notions. Loosely speaking, an allocation rule is consistent if whenever agents leave the problem with their own allocations, the residual problem's solution makes the same allocation among the remaining agents. On the other hand, aggregation independence says that the order (or timing) in which the allocation of a given number of jobs occurs does not change the identity of those who will get the jobs. Different notions of consistency have been used in other matching and allocation problems based on priorities as well \citep{ergin2002efficient,klaus_local_2013}.

Finally, a big part of our analysis concerns what we denote compositional objectives: objectives regarding characteristics that some portions of the workers hired should have, such as a minimum proportion of workers with disabilities, ethnic minorities, or certain genders. \cite{Sonmez2019-hr} evaluated the constitutionally mandated affirmative action policy used in the hiring of public sector workers in India. Similar to the cases we study, workers are also selected based on open competitions that result in an order of merit of the candidates. They also identify some shortcomings that result from how the rules are used to implement affirmative action objectives and propose solutions for them. However, they do not consider the cases in which hirings occur in multiple rounds,\footnote{The Indian civil service, like the Brazilian one, uses reserve lists that are valid for two years. Source: Indian Union Public Service Commission} which, as we show in this paper, might have significant consequences.

Most of the positive and normative literature on the market design consequences of affirmative action policies focus on college admissions and school choice. \cite{Kojima2012-uv} and \cite{hafalir2013effective} evaluate the use of maximum quotas (which limit the number of non-minority students who can be admitted in a school) with minority reserves in the context of a centralized school choice procedure. They show that majority quotas may paradoxically hurt minority students, while minority reserves improve upon this problem. However, these welfare results have no parallel in our analysis, in which workers are either hired or not. Several other papers also evaluate affirmative action procedures currently used to select students into schools or universities, identifying shortcomings and proposing alternative procedures. \cite{aygun2018college} show how the affirmative action procedure used in university admissions in Brazil results in fairness and incentive problems. \cite{Dur2020-ph} studied the allocation of students to Chicago's elite public high schools and compared various reservation policies.\footnote{Other papers, such as \cite{abdulkadirouglu2003school,echenique2015control,bo2016fair,abdulkadirouglu2005college} evaluated affirmative action policies in school and college matching.} All of them, however, treat the problem from a static perspective: either only one choice is made, or a complete allocation is produced once and for all. 

\section{Hiring by rules and aggregation independence\label{sec:Hiring-by-rules}}

A\textbf{ rule} determines which workers an institution should hire,
given a number of workers to hire, a pool of workers, and, potentially,
the workers that the institution hired before. Each time an institution
attempts to hire workers from the pool is denoted as a\textbf{ round}.

Let $A$ be the set of workers hired in previous rounds, and $W$
be the set of workers available. For each $\left(W,A,q\right)$,
a rule $\varphi$ determines which $q$ workers from $W$ should be
hired. That is, for each $\left(W,A,q\right)$,   $\varphi\left(W,A,q\right)\subseteq W$ and $\left|\varphi\left(W,A,q\right)\right|=\min\left\{ q,\left|W\right|\right\} $. For simplicity of notation, we will sometimes use the following shorthand:

\[
\varphi\left(W,A,\langle q_{1},\ldots,q_{t}\rangle \right)\equiv\varphi\left(W^{0},A^{0},q_{1}\right)\cup\varphi\left(W^{1},A^{1},q_{2}\right)\cup\cdots\cup\varphi\left(W^{t-1},A^{t-1},q_{t}\right)
\]

Where $A^{0}=\emptyset$, $W^{0}=W$ and for $i>0$, $A^{i}=A^{i-1}\cup\varphi\left(W^{i-1},A^{i-1},q_{i}\right)$
and $W^{i}=W\backslash A^{i}$. Furthermore, for simplicity, we will use the
shorter notation $\varphi\left(W,q\right)$ when $A=\emptyset$. Unless stated explicitly, none of our results rely on situations in which there are not enough workers, either in general or with some characteristics, to be hired. That is, in all of our results, we will assume that the number of workers in $W$ is at least as large as $\sum q_i$ , and the same holds for the cases that we will evaluate in which some workers belong to minority groups.

One crucial property of the process of hiring by rules is that the
sequence of hires $\lambda=\langle q_{1},\ldots,q_{t}\rangle$ is ex-ante unknown. That
is, every round may or may not be the last one. The total number of
workers who will be hired is also unknown. Therefore, the properties
that we will deem as desirable should hold at any point in time. In
this context, a critical property that a rule should satisfy is\textbf{
aggregation independence}. A rule is aggregation independent if the
total set of workers hired after a certain number of rounds does not
depend on how they are distributed among rounds.
\begin{defn}
A rule $\varphi$ is\textbf{ aggregation independent} if for any $q\geq q_{1}\geq0$
and sets of workers $W$ and $A$, $\varphi\left(W,A,q\right)=\varphi\left(W,A,\langle q_{1},q-q_{1}\rangle \right)$.
\end{defn}
Therefore, when the rule being used is aggregation independent, an
institution that hires $q_{1}$ workers in the first round and $q_{2}$
in the second will select the same workers that it would by hiring
$q_{1}+q_{2}$ in a single round. One can easily check that
if a rule is aggregation independent, this extends to any combination
of rounds: if $\sum_{i}q_{i}=\sum_{j}q_{j}'$, a sequence of
hires $q_{1},\ldots,q_{n}$ will select the same workers as
$q_{1}',\ldots,q_{m}'$.

We now provide three reasons to justify aggregation independence as
a strongly desired property for rules for hiring by rules: transparency,
non-manipulability, and robustness.

\textbf{Transparency}

One of the main reasons driving governments and institutions to
use hiring by rules is that, for those who are not hired, the reason that happens is made clear and straightforward. For example, take the
rule that consists of always hiring the workers with the highest exam scores. By knowing the rule and observing the hired workers (and their scores), any worker who was not hired knows that
there was no obscure reason why she was not yet hired: it is merely
because her score was lower than all those who were hired.

Suppose, however, that the rule that is used is not aggregation independent.
Then, a worker who was not hired, by just looking at the set of workers
who were hired, may not be able to easily understand why she was not
hired, even understanding the rule that was used, because it would also be necessary for her to know the precise sequence of the number of
workers that were hired in each round.

\textbf{Non-manipulability}

While many times the rules which govern the hiring process are chosen
in a way that reduces the ability of managers to make arbitrary choices
of whom to hire, they may have freedom in choosing the sequence of
hires. For example, instead of hiring four workers in one month, she
may choose to hire two workers first and then two additional workers.

If the rule is aggregation independent, different choices
of sequences of workers hired will not lead to different sets of workers
hired. However, if the rule is not aggregation independent, that may
not be the case, and a manager may choose a specific sequence of hiring
decisions, which will allow a particular worker to be hired, whereas
she would not be, absent the specific sequence chosen. An aggregation
independent rule, by definition, is not manipulable by the choice
of the sequence of hires.

\textbf{Robustness}

The third reason why aggregation independence is a desirable property
is that the degree to which the set of workers hired satisfies the
objectives represented in the rule is robust to uncertainty or bad
planning on the part of the manager in terms of the number of workers
that is needed. In other words, assuming that the criterion for choosing
workers which is set by the rule represents the desirability of the
workers it chooses (for example, it chooses the most qualified set
of workers subject to some constraint), an aggregation independent
rule will always choose the best set of workers, whether the manager
makes hiring decisions all at once or continually re-evaluates the
number of workers to be hired. Aggregation independent rules do not
have that problem: managers may hire workers based on demand,
and that will not result in a less desirable set of workers hired.

In Section \ref{subsec:SequentialMinorityReserves} we show specific
examples of how aggregation independence relates to non-manipulability
and robustness.

\section{Score-based rules\label{sec:Ranking-based-rules}}

A common way workers are selected when hiring by rules is
through a scoring of all workers. Using criteria such as written exams,
evaluation of diplomas, certificates, and experience, workers receive
a score (or a number of points). These scores become the deciding factor
of who to hire: when hiring $q$ workers, hire the $q$ workers
with the highest scores from the pool. For a set of workers $W$,
let $\bm{s_{W}}=(s_{w})_{w\in W}$ be the score profile of workers
in $W$, where for all $w\neq w'$, $s_w\neq s_{w'}$.\footnote{Our assumption that no two workers have the same score is based on how the procedures that we consider work in real life. Even when discretized scores are used for evaluating the candidates, the process always results in a strict ordering of these workers. This can be seen in how the legislation refers to an ``order of merit''', or in the details of the hiring posts, which often describe multiple (deterministic) methods for breaking ties.} Denote by $top_{q}(W)$ the $q$ workers with the highest scores in $W$.\footnote{Although $W$ is a set, for simplicity of notation we will consider
$s_{W}$ following the order in which the elements of $W$ are written.
For example, if we denote $W=\left\{ w_{2},w_{1},w_{3}\right\} $
, $s_{W}=\left(10,20,30\right)$ implies that worker $w_{2}$ has
a score of $10$. } A natural property for a score-based rule is for it to be\textbf{
fair}. That is, after any number of rounds, if a worker $w$ was hired
and $w'$ was not, then $s_{w}>s_{w'}$.
\begin{defn}
A rule $\varphi$ is\textbf{ fair} if for any $W$, $A$ and $\lambda=\langle q_{1},\ldots,q_{t}\rangle $,
$w\in\varphi\left(W,A,\lambda \right)$
and $w'\not\in\varphi\left(W,A,\lambda \right)$
implies that $s_{w}>s_{w'}$.
\end{defn}
A natural rule for these kinds of problems is what we denote by\textbf{
sequential priority}. When hiring $q$ workers, it consists of selecting
the $q$ workers with the highest score from the pool of workers.
If the pool contains less than $q$ workers, then hire all of them.
The following remark comes immediately from the definition of the
rule.
\begin{rem}
The sequential priority rule is aggregation independent and fair.
\end{rem}
When the selection of workers is based on scores, which is a very
typical setup, the sequential priority rule gives us all we need: it is
fair and aggregation independent.

\section{Compositional objectives\label{sec:CompositionalObjectives}}

It is common for hiring processes based on rules to combine the
use of scores with compositional objectives, such as affirmative action.
Typically, the objective is to reserve some of the jobs for workers
with a certain characteristic, sometimes those belonging to an ethnic minority
or those who possess some type of disability. Denote by $M$ the set
of workers who belong to the minority group (that is, $M\subseteq W$)
and $\omega\left(W\right)$ be the number of minority workers in $W$. The
affirmative action policy also has a\textbf{ minority ratio} $m$,
where $0\leq m\leq1$, which represents the proportion of hires that
should be based on affirmative action.

As argued in Section \ref{sec:Hiring-by-rules}, the desirable properties
associated with affirmative action should also hold after any number
of rounds. Our first requirement is that, when possible, the proportion
of selected minorities should be at least $m$ after each round.
\begin{defn}
\label{def:RespectsMinorityRights}
A rule $\varphi$\textbf{ respects minority rights} if, for any $W$
and sequence of hires $\lambda=\langle q_{1},\ldots,q_{t}\rangle $, $(i)$
when $\left|M\right|\geq m\times\sum_{i=1}^{t}q_{i}$ we have $\omega\left(\varphi\left(W,\lambda \right)\right)/\left|\varphi\left(W,\lambda \right)\right|\geq m$,
or $(ii)$ when $|M|<m\times\sum_{i=1}^{t}q_{i}$ we have $M\subset\varphi\left(W,\lambda \right)$.
\end{defn}
\begin{rem}
The sequential priority rule does not respect minority rights. In
general,\textit{\emph{ fairness}} is incompatible with\textit{\emph{
respecting minority rights}}\emph{.}\footnote{Assume that there are three workers: one minority (call him K) and
two non-minority (L and V), where the scores are as follows $s_{L}>s_{V}>s_{K}$
. If the rule needs to select two workers and $m$ is 0.3, then in
order to\textit{ respect minority rights}, the rule should select
K and L, which is not\textit{ fair}, as\textit{ fairness} requires
L and V to be selected.}
\end{rem}
Therefore, we define a weaker notion of fairness, which takes into
account the minority restriction. A rule is minority fair if, conditional
on respecting minority rights, the hiring decision is based on
scores.
\begin{defn}
\label{def:MinorityFair}
A rule $\varphi$ is\textbf{ minority fair} if, for any $W$, $M\subseteq W$
and $\lambda=\langle q_{1},\ldots,q_{t}\rangle $, where $H=\varphi\left(W,\lambda \right)$:

$(i)$ for each $w,w'\in W\setminus M$ or $w,w'\in M$, if $w\in H$
and $w'\notin H$, then $s_{w}>s_{w'}$,

$(ii)$ for each $w\in W\setminus M$ and $w'\in M$, if $s_{w}<s_{w'}$
and $w\in H$, then $w'\in H$,

$(iii)$ if there is $w\in W\setminus M$ and $w'\in M$ with $s_{w}>s_{w'}$,
$w\notin H$ and $w'\in H$, then $\omega\left(H\right)/\left|H\right|\leq m$.
\end{defn}
In words, a rule is minority fair if it (i) chooses between workers
from the same group (minorities or non-minorities) based on their
scores, (ii) does not hire low-scoring non-minorities while higher-scoring
minorities are available, and (iii) only hires low-scoring minorities
over higher-scoring non-minorities when that is necessary to bring
the ratio of minorities closer to $m$ from below. 

One natural question one might have is whether a carefully designed ``standardized'' scoring system could turn the sequential priority rule into one that respects minority rights and/or is minority fair. In other words, is there a transformation $s_W'$ of the scores $s_W$, for which the sequential priority rule under $s_W'$ respects minority rights and is minority fair under $s_W$? The answer for the first part is yes, and for the second, no. To see this, let $\overline{s}$ be the highest score any worker might have, and $s_W'$ be a scoring system that replicates $s_W$ for non-minorities, and that adds $\overline{s}$ to the scores of minorities. That is, $s_W'$ makes all minorities have higher scores than non-minorities but keep all relative scores the same otherwise. It is clear that the sequential priority rule under this scoring system respects minority rights but might not be minority fair since it will always hire minority candidates whenever some are still available, regardless of the value of $m$.

To see that no transformation of a scoring system can respect minority rights and be minority fair, consider the following problem: $W=\left\{w_1,w_2,w_3,w_4\right\}$, where $M=\left\{w_2,w_4\right\}$, $s_W=\left(100,90,80,70\right)$ and $m=0.5$. Let $s_W'$ be the scoring of the workers in $W$ derived from this transformation. Since the sequential priority rule under $s_W'$ respects minority rights and is minority fair, it is easy to check that it must satisfy $s_{w_2}>s_{w_1}>s_{w_4}>s_{w_3}$. Imagine, however, that the set of workers is $W=\left\{w_1,w_3,w_4\right\}$. In this case, the sequential priority under $s_W'$ would imply that if $q_1=1$, worker $w_1$ would be hired, which is a violation of minority rights since it would require $w_4$ to be hired instead.\footnote{An alternative question that one might ask is whether there is a scoring function, defined for each given set of workers and original scores, which respects minority rights and is minority fair. The answer for this one will be yes, since one can produce a ranking of workers from any aggregation independent rule by choosing one worker at a time, and as we will show later in this paper, such a rule exists. In terms of interpretation, we believe that a scoring rule that is endogenous to the set of workers being considered is more of a technical property that results from aggregation independence than a criterion that can be described as a scoring method for hiring before knowing who will apply for the jobs.}


\subsection{Sequential use of minority reserves\label{subsec:SequentialMinorityReserves}}

Perhaps the most natural candidate for a hiring rule for this family of problems is the use of \textit{reserves}. With this method, in each period
in which there are vacancies to be filled, the institution uses a
\textit{choice procedure generated by reserves} \citep{hafalir2013effective,echenique2015control},
with a proportion $m$ of vacancies reserved for minority workers.

Given a set of workers $W$, of minorities $M\subseteq W$, a number
of reserved positions $q^{m}$ and of hires $q$, a choice generated
by reserves consists of hiring the top $\min\left\{ q^{m},\left|M\right|\right\} $
workers from $M$ and then filling the remaining $q-\min\left\{ q^{m},\left|M\right|\right\} $
positions with the top workers in $W$ still available. In a static
setting, this procedure is shown to have desirable fairness and efficiency
properties while satisfying the compositional objectives \citep{hafalir2013effective}.
We denote the sequential use of minority reserves rule by $\varphi^{SM}$.

In our setting, therefore, the sequential use of minority reserves
rule consists of, in a round $r$, hiring $q_{r}$ workers, reserving
$m\times q_{r}$ of them for minorities.
\begin{prop}
\label{prop:SeqMinorityReservesRespectsMinorityRightsNotConsistentNotFair}The
sequential use of minority reserves respects minority rights. However,
it is neither minority fair nor aggregation independent.
\end{prop}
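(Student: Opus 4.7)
The proof breaks into three claims. For the positive claim that $\varphi^{SM}$ respects minority rights, I would give a direct round-by-round counting argument. For the two negative claims, a single small counterexample can be made to do double duty.

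For the minority-rights claim, the key observation is that the choice procedure generated by reserves hires, in each round $r$, at least $\min\{m q_r, |M \cap W^{r-1}|\}$ minority workers (using the natural ceiling convention for $m q_r$ when the product is not an integer). Summing across rounds and splitting into the cases of Definition \ref{def:RespectsMinorityRights}: if $|M| \geq m \sum_i q_i$, then either the minority pool is never exhausted and exactly $\lceil m q_r \rceil \geq m q_r$ minorities are hired in each round (so the total is at least $m \sum_i q_i$ and the ratio is at least $m$), or the pool is exhausted before the end, in which case the total minorities hired equals $|M| \geq m \sum_i q_i$; if instead $|M| < m \sum_i q_i$, then the pool of minorities must be fully depleted by round $t$, giving $M \subset \varphi^{SM}(W, \lambda)$.

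For the two negative claims, consider $W = \{w_1, w_2, w_3, w_4\}$ with $s_W = (100, 90, 80, 70)$, $M = \{w_2, w_4\}$, and $m = 1/2$. A direct computation shows that $\varphi^{SM}(W, 2)$ reserves one minority position, picks $w_2$ as the top minority, and then fills the remaining position with the top worker $w_1$, yielding $\{w_1, w_2\}$. For $\varphi^{SM}(W, \langle 1, 1 \rangle)$, round one reserves one minority position and hires $w_2$; round two applies the same rule to $\{w_1, w_3, w_4\}$, reserving one minority position and hiring $w_4$; the final set is $H = \{w_2, w_4\}$. Since $\{w_1, w_2\} \neq \{w_2, w_4\}$, aggregation independence fails. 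To refute minority fairness, apply Definition \ref{def:MinorityFair} to $H = \{w_2, w_4\}$: the pair $w_1 \in W \setminus M$ and $w_2 \in M$ satisfies $s_{w_1} > s_{w_2}$, $w_1 \notin H$, and $w_2 \in H$, so condition $(iii)$ would require $\omega(H)/|H| \leq m = 1/2$; but $\omega(H)/|H| = 1$, a contradiction.

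The chief subtlety is the rounding convention for $m q_r$, which is not an integer when $q_r = 1$ and $m = 1/2$. The counterexample above relies on the ceiling convention, reserving at least one minority position whenever $m q_r > 0$; this is the natural choice for an affirmative-action rule and is precisely what is needed to validate the first part of the proposition round by round. Under a floor convention the split $\langle 1, 1 \rangle$ would already violate minority rights in this example, so the rule could not even satisfy its primary design goal; either way, the two negative conclusions in the proposition hold.
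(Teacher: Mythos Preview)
Your proposal is correct and follows the same overall strategy as the paper: a direct counting argument for the positive claim and a single counterexample for both negative claims. Your minority-rights argument is in fact more careful than the paper's one-line justification.

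The only noteworthy difference is the counterexample. The paper uses Example~\ref{example:sequentialMinorityReserves} with five workers, $m=0.5$, and $q_1=q_2=2$, so that $m q_r=1$ is an integer and no rounding convention is invoked. Your four-worker example with $q_1=q_2=1$ is smaller but hinges on $\lceil m q_r\rceil=1$ when $m q_r=1/2$; you correctly flag this and note that the ceiling convention is the one the rule actually uses (as in the appendix description of $\varphi^{SM}$). Both examples work, but the paper's choice has the minor advantage of being robust to the integer-rounding convention.
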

The next example shows the problems associated with this rule.
\begin{example}
\label{example:sequentialMinorityReserves}Let $W=\left\{ w_{1},w_{2},w_{3},w_{4},w_{5}\right\} $,
$M=\left\{ w_{1},w_{2},w_{5}\right\} $, and $s_{W}=\left(100,90,80,50,20\right)$.
Let $r=2$, $q_{1}=q_{2}=2$ and $m=0.5$. In the first round, the
top worker from $M$ and the top from $W\backslash\left\{ w_{1}\right\} $
are hired, that is, $\left\{ w_{1},w_{2}\right\} $. In the second
round, the pools of remaining workers are $W^{2}=\left\{ w_{3},w_{4},w_{5}\right\} $
and $M^{2}=\left\{ w_{5}\right\} $. The top worker from $M^{2}$
, that is, $\left\{ w_{5}\right\} $, and the top from $W^{2}\backslash\left\{ w_{5}\right\} $
are hired. Therefore, $\left\{ w_{3},w_{5}\right\} $ are hired in
the second round and $\varphi^{SM}\left(W,\langle q_{1},q_{2}\rangle \right)=\left\{ w_{1},w_{2},w_{3},w_{5}\right\} $.

Now consider the case where $q=q_{1}+q_{2}=4$. Then in the first
and unique round, the two top workers from $M$, $\left\{ w_{1},w_{2}\right\} $,
and the top two workers from $W\backslash\left\{ w_{1},w_{2}\right\} $
are hired, that is, $\left\{ w_{3},w_{4}\right\} $. Therefore, $\varphi^{SM}\left(W,q_{1}+q_{2}\right)=\left\{ w_{1},w_{2},w_{3},w_{4}\right\} $.
Therefore, the $\varphi^{SM}$ rule is\emph{ not aggregation independent}.
Moreover, note that $w_{4}\in\varphi^{SM}\left(W,\langle q_{1},q_{2}\rangle \right)$,
$w_{5}\in\varphi^{SM}\left(W,\langle q_{1},q_{2}\rangle \right)$
and $s_{w_{4}}=50>20=s_{w_{5}}$ while $\omega\left(\varphi^{SM}\left(W,\langle q_{1},q_{2}\rangle \right)\right)/\left|\varphi^{SM}\left(W,\langle q_{1},q_{2}\rangle \right)\right|=0.75>0.5=m$,
implying that the $\varphi^{SM}$ rule is\emph{ not minority fair}.
\end{example}
The sequential use of minority reserves rule is a good rule for providing
examples of problems associated with rules that are not aggregation
independent. First, consider the issue of \textbf{manipulability}.
Take the example \ref{example:sequentialMinorityReserves} above and suppose
that the manager prefers to hire the worker $w_{5}$. If she hires the four workers that she
needs all at once, $w_{5}$ would not be hired. If, instead, she
chooses to hire first two workers, and then later two more workers,
$w_{5}$ will be hired. That is, by choosing a sequence of hires strategically, the manager can hire the person she wanted. 

Next, we show that the lack of aggregation independence may lead to
the hiring of a group of workers who are not in line with some common
objectives of desirability, (the issue of \textbf{robustness}, as
described in section \ref{sec:Hiring-by-rules}). To see how this
can be a problem, consider the example below:
\begin{example}
Let $W=\{w_{1},w_{2},w_{3},w_{4},w_{5},w_{6},w_{7},w_{8}\}$, $M=\{w_{5},w_{6},w_{7},w_{8}\}$,
$s_{W}=(100,90,80,70,60,50,40,30)$ and $m=0.5$.

If workers are hired in four rounds, where $q_{1}=q_{2}=q_{3}=q_{4}=1$,
the set of workers hired will be $\{w_{5},w_{6},w_{7},w_{8}\}$. If,
on the other hand, workers are hired all at once, with $q_{1}=4$,
the set of workers hired will be $\{w_{1},w_{2},w_{5},w_{6}\}$
\end{example}
Assuming that the scores are a good representation of the degree of desirability
of a worker for a task, the example above shows that a lack of planning
could lead to hiring a set of workers that are substantially less qualified.

\section{Sequential adjusted minority reserves\textbf{\label{sec:SequentialAdjustedMinorReserves}}}

We now present a new rule, \textit{sequential adjusted minority reserves},
denoted by $\varphi^{SA}$. It consists of the sequential minority
reserves rule in which the number of vacancies reserved for minorities
is adjusted in response to hires made in previous rounds. More specifically,
the rule works as follows:\footnote{For simplicity, the description below assumes that the number of workers
in $M$ and $W$ is large enough so that in every round there is a
sufficient number of them to be hired. A more general description
can be found in the appendix.}
\begin{lyxlist}{00.00.0000}
\item [{\textbf{Round 1}}] Let $m^{1}=m$, $M^{1}=M$ and $W^{1}=W$.
The top $m^{1}\times q_{1}$ workers from $M^{1}$ are hired. Denoted
those workers by $A^{*}$. Additionally, the top $\left(1-m^{1}\right)\times q_{1}$
workers from $W^{1}\backslash A^{*}$ are hired. Let $M^{2}$ be the
workers in $M^{1}$ who were not yet hired, and $W^{2}$ be the workers
in $W^{1}$ who were not yet hired.
\item [{\textbf{Round $r\geq 1$}}] Let $A^{r}=\varphi^{SA}\left(W,\langle q_{1},\ldots,q_{r-1}\rangle \right)$
and $m^{r}=\max\left\{ m-\frac{\omega\left(A^{r}\right)}{\sum_{i=1}^{r}q^{i}},0\right\} $.
The $m^{r}\times q_{r}$ top scoring minority workers in $M^{r-1}$
are hired. Denote those workers by $A^{*}$. Additionally, the top
$\left(1-m^{r}\right)\times q_{r}$ workers from $W^{r}\backslash A^{*}$
are hired. Let $M^{r+1}$ be the workers in $M^{r}$ who were not
yet hired, and $W^{r+1}$ be the workers in $W^{r}$ who were not
yet hired.
\end{lyxlist}
Therefore, the sequential adjusted minority
reserves adapts the set of workers hired according to those
hired in previous rounds. This makes sense: if we do not take into
account, for example, that after the last round, the number of minority
workers greatly exceeded the minimum required, some high-scoring non-minority
workers may not be hired, leading to a violation of minority fairness.
The theorem below shows that this is essentially the only way of achieving these objectives.

\begin{thm}
\label{thm:MSPMREquivMinorityRIghtsFairConsistent} If $\varphi$ is a rule that is minority fair and respects minority rights, then for every set of workers $W$ and sequence of hires $\lambda$, $\varphi(W,\lambda)=\varphi^{SA}(W,\lambda)$.
\end{thm}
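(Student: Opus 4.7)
My plan is strong induction on the length $t$ of $\lambda=\langle q_1,\ldots,q_t\rangle$, aiming to show that every rule satisfying the two properties produces, after each round, a uniquely determined set, which must then coincide with $\varphi^{SA}(W,\lambda)$. The structural observation I would state at the outset is that property $(i)$ of minority fairness forces the cumulative output $H_r:=\varphi(W,\langle q_1,\ldots,q_r\rangle)$ to consist exactly of the $k_r$ top-scoring minorities and the $|H_r|-k_r$ top-scoring non-minorities, where $k_r:=\omega(H_r)$. Hence $H_r$ is pinned down by the single integer $k_r$, and the theorem reduces to showing uniqueness of $k_r$ at every round.

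For the base case $t=1$, let $k^*_1$ denote the number of minorities among the top $q_1$ workers of $W$. Property $(ii)$ rules out $k_1<k^*_1$, since that would place a low-scoring non-minority in $H_1$ below a higher-scoring excluded minority. If $k_1>k^*_1$, property $(iii)$ is triggered (a lower-scoring minority in $H_1$ sits above a higher-scoring excluded non-minority) and forces $k_1\leq mq_1$; combined with $k_1\geq mq_1$ from respecting minority rights this yields $k_1=\max\{k^*_1,mq_1\}$, which matches the first round of $\varphi^{SA}$.

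For the inductive step, assume the conclusion for all sequences of length less than $t$, so that $H_{t-1}=\varphi^{SA}(W,\langle q_1,\ldots,q_{t-1}\rangle)$ is determined and has minority count $k_{t-1}=\max\{k^*_{t-1},m|H_{t-1}|\}$ by the same characterization. Because hires cannot be revoked, $H_{t-1}\subseteq H_t$, which given the structural form is equivalent to $k_{t-1}\leq k_t\leq k_{t-1}+q_t$. Running exactly the same trichotomy from $(ii)$, $(iii)$, and minority rights on the length-$t$ cumulative set gives $k_t=\max\{k^*_t,m|H_t|\}$, where $k^*_t$ is the minority count among the top $|H_t|$ workers overall. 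A monotonicity check, split across four subcases according to the signs of $k^*_r-m|H_r|$ at $r=t-1,t$, verifies that this value automatically lies in $[k_{t-1},k_{t-1}+q_t]$, using only the two elementary inequalities $k^*_{t-1}\leq k^*_t\leq k^*_{t-1}+q_t$ and $m|H_t|=m|H_{t-1}|+mq_t$; thus the inclusion constraint imposes no extra restriction. Since $\varphi^{SA}$ itself is minority fair and respects minority rights, the same uniqueness forces $\omega(\varphi^{SA}(W,\lambda))=k_t$, and the structural form then yields $\varphi(W,\lambda)=\varphi^{SA}(W,\lambda)$.

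The main obstacle I anticipate is integer-divisibility bookkeeping when $m|H_r|$ is not a whole number: the trichotomy must be restated with ceilings, yielding $k_t=\max\{k^*_t,\lceil m|H_t|\rceil\}$, and one must verify that the round-by-round adjustment $m^r$ used by $\varphi^{SA}$ selects precisely this many minorities in the cumulative output even as the cumulative ratio oscillates around $m$. The case-by-case inclusion check is conceptually simple but requires care in the subcase where $k^*_{t-1}$ already exceeds the quota, since there the admissible increment $k_t-k_{t-1}$ is bounded by $mq_t$ rather than by $q_t$, and one must confirm it is still nonnegative.
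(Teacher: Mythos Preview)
Your approach is correct, and the core trichotomy (using properties (ii), (iii), and respecting minority rights to pin down $k_r=\max\{k^*_r,m|H_r|\}$) is exactly the engine of the paper's proof as well. The paper, however, packages it more economically: it extracts the trichotomy as a standalone static lemma---for any $q$ there is a unique subset of $W$ of size $q$ that is minority fair and respects minority rights as a set---and then observes that since both definitions constrain the cumulative output $\varphi(W,\lambda)$ for \emph{every} sequence $\lambda$ directly, this lemma immediately gives $\varphi(W,\lambda)=\varphi^{SA}(W,\lambda)$ once one has checked that $\varphi^{SA}$ itself satisfies the two properties. No induction on rounds is needed, and consequently neither is your monotonicity/inclusion check: the containment $H_{t-1}\subseteq H_t$ holds automatically for any rule, and since $\varphi^{SA}$ is such a rule (a fact you invoke at the end anyway), its cumulative minority count already equals the unique $k_t$, so compatibility with the inclusion constraint comes for free. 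Your four-subcase verification and the ceiling bookkeeping you anticipate are thus redundant scaffolding around the same idea; the paper's static-lemma formulation is shorter and sidesteps those obstacles entirely.
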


The sequential adjusted minority reserves is also the only rule that extends ``static'' notions of fairness to this dynamic setting while being aggregation independent. To see that, we first define the static counterparts of definitions \ref{def:RespectsMinorityRights} and \ref{def:MinorityFair}.

\begin{defn}
\label{def:RespectsStaticMinorityRights}
A rule $\varphi$\textbf{ respects static minority rights} if, for any $W$
and $q> 0$, $(i)$
when $\left|M\right|\geq m\times q$ we have $\omega\left(\varphi\left(W,\langle q \rangle \right)\right)/\left|\varphi\left(W,\langle q \rangle \right)\right|\geq m$,
or $(ii)$ when $|M|<m\times q$ we have $M\subset\varphi\left(W,\langle q \rangle \right)$.
\end{defn}

\begin{defn}
\label{def:StaticMinorityFair}
A rule $\varphi$ is\textbf{ static minority fair} if, for any $W$, $M\subseteq W$
and $q>0$, where $H=\varphi\left(W,\langle q \rangle \right)$:

$(i)$ for each $w,w'\in W\setminus M$ or $w,w'\in M$, if $w\in H$
and $w'\notin H$, then $s_{w}>s_{w'}$,

$(ii)$ for each $w\in W\setminus M$ and $w'\in M$, if $s_{w}<s_{w'}$
and $w\in H$, then $w'\in H$,

$(iii)$ if there is $w\in W\setminus M$ and $w'\in M$ with $s_{w}>s_{w'}$,
$w\notin H$ and $w'\in H$, then $\omega\left(H\right)/\left|H\right|\leq m$.
\end{defn}

In other words, a rule satisfies the static versions of these two notions if they hold when there is only one round of hiring. As a result, a rule that respects minority rights also respects static minority rights, and a rule that is minority fair is also static minority fair. However, the converse does not apply: definitions \ref{def:RespectsStaticMinorityRights} and \ref{def:StaticMinorityFair} restrict only the first set of workers hired in a sequence of hires.

The following result shows that the sequential adjusted minority reserves is essentially\footnote{Since the definition of a rule includes an arbitrary set of previous hires, there can be more rules that can satisfy those properties by defining them differently, for example, for sets of previous hires that could not result from using the rule from the beginning. If we restrict ourselves to the case in which $A=\emptyset$ (the rule is used for every hire ever made), this is a uniqueness result.} the only rule that extends these static notions to multiple hires while being aggregation independent.

\begin{thm}
\label{thm:SAMAggIndep} 
If $\varphi$ is a rule that respects static minority rights, is static minority fair, and aggregation independent, then for every set of workers $W$ and sequence of hires $\lambda$, $\varphi(W,\lambda)=\varphi^{SA}(W,\lambda)$.
\end{thm}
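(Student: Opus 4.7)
The plan is to leverage Theorem \ref{thm:MSPMREquivMinorityRIghtsFairConsistent}: I will show that aggregation independence upgrades the static hypotheses of Theorem \ref{thm:SAMAggIndep} to their dynamic counterparts (Definitions \ref{def:RespectsMinorityRights} and \ref{def:MinorityFair}), after which the conclusion is immediate.

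First, I would establish by induction on the length $t$ of $\lambda=\langle q_{1},\ldots,q_{t}\rangle$ that, for any aggregation independent rule $\varphi$ and any $W$ and $A$,
\[
\varphi(W,A,\lambda)\;=\;\varphi\bigl(W,A,\bigl\langle \textstyle\sum_{i=1}^{t}q_{i}\bigr\rangle\bigr).
\]
The base case $t=1$ is immediate. For the inductive step, one applies the inductive hypothesis at the intermediate state $(W^{1},A^{1})$ to collapse $\langle q_{2},\ldots,q_{t}\rangle$ into a single round of size $q_{2}+\cdots+q_{t}$, and then uses aggregation independence once more at the outer level to merge $q_{1}$ with this aggregate. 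Note that the second application invokes aggregation independence at a state with non-empty $A^{1}$, which is permitted because aggregation independence is universally quantified over $(W,A,q)$.

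Second, I would observe that Definitions \ref{def:RespectsMinorityRights} and \ref{def:MinorityFair} state conditions purely on the final set $H=\varphi(W,\lambda)$, and that their static versions in Definitions \ref{def:RespectsStaticMinorityRights} and \ref{def:StaticMinorityFair} impose those same conditions on $H=\varphi(W,\langle q\rangle)$. Setting $q=\sum_{i}q_{i}$, the previous step gives $\varphi(W,\lambda)=\varphi(W,\langle q\rangle)$ as sets, so the static hypotheses on $\varphi$ automatically upgrade to the corresponding dynamic properties. Theorem \ref{thm:MSPMREquivMinorityRIghtsFairConsistent} then yields $\varphi(W,\lambda)=\varphi^{SA}(W,\lambda)$ for every $W$ and $\lambda$, which is the desired conclusion.

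I do not anticipate a substantive obstacle: the real content has been absorbed into Theorem \ref{thm:MSPMREquivMinorityRIghtsFairConsistent}, and the only novelty here is the bridging argument from static to dynamic fairness notions via aggregation independence. The one point to handle with care is the iterative reduction at non-empty intermediate hire histories, as just noted. Consistently with the footnote attached to the theorem statement, this strategy delivers uniqueness only along sequences that start from $A=\emptyset$, which matches the scope of the claim.
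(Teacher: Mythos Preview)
Your proposal is correct. The core content is the same as the paper's---both arguments rest on the fact that aggregation independence collapses $\varphi(W,\lambda)$ to the single-round output $\varphi(W,\langle\sum_i q_i\rangle)$, and then invoke the uniqueness result established in the proof of Theorem~\ref{thm:MSPMREquivMinorityRIghtsFairConsistent} (specifically Lemma~\ref{lem:UniqueMinoritySet}).

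The organizational difference is that you factor through Theorem~\ref{thm:MSPMREquivMinorityRIghtsFairConsistent} as a black box: you first prove the collapse $\varphi(W,A,\lambda)=\varphi(W,A,\langle\sum_i q_i\rangle)$ in isolation, then observe that this identity promotes the static hypotheses to their dynamic counterparts (since Definitions~\ref{def:RespectsMinorityRights} and~\ref{def:MinorityFair} impose conditions only on the final hired set $H$), and finally apply Theorem~\ref{thm:MSPMREquivMinorityRIghtsFairConsistent}. The paper instead runs a direct induction on the number of rounds to show $\varphi=\varphi^{SA}$, invoking Lemma~\ref{lem:UniqueMinoritySet} at each step to pin down the single-round outputs $\varphi(W,\langle q\rangle)$ and $\varphi(W,\langle q+q_{\ell+1}\rangle)$, and then reading off the incremental hire as the set difference. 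Your route is somewhat more modular and makes explicit the conceptual point that aggregation independence is precisely what bridges the static and dynamic axioms; the paper's route avoids re-invoking Theorem~\ref{thm:MSPMREquivMinorityRIghtsFairConsistent} wholesale but in exchange must track the round-by-round increments. Neither approach requires anything the other does not already have available.
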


Moreover, notice that the first round of hiring in the sequential adjusted minority reserves is equivalent to the one in the sequential use of minority reserves. Therefore, the sequential adjusted minority reserves is also the aggregation independent extension of the static \textit{minority reserves} rule \citep{hafalir2013effective}.

\begin{cor}
\label{cor:SAMIsMinorityReserves+AI}
If $\varphi$ is a rule for which $\varphi(W,\langle q\rangle)=\varphi^{SM}(W,\langle q\rangle)$ and $\varphi$ is aggregation independent, then for every set of workers $W$ and sequence of hires $\lambda$, $\varphi(W,\lambda)=\varphi^{SA}(W,\lambda)$.
\end{cor}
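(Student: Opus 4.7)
The plan is to deduce the corollary directly from Theorem \ref{thm:SAMAggIndep} by showing that the two hypotheses of that theorem (respecting static minority rights and being static minority fair) are already built into the assumption that $\varphi$ agrees with $\varphi^{SM}$ on single-round problems. In other words, the single-round version of sequential use of minority reserves is exactly the static minority reserves rule of \cite{hafalir2013effective}, and that rule is already known to satisfy both static notions; aggregation independence then lets us invoke Theorem \ref{thm:SAMAggIndep}.

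Concretely, the first step is to verify that $\varphi^{SM}(W,\langle q\rangle)$, for a single round, respects static minority rights and is static minority fair, as defined in Definitions \ref{def:RespectsStaticMinorityRights} and \ref{def:StaticMinorityFair}. This is immediate from the description of the rule: in a single round, $\varphi^{SM}$ first hires the top $\min\{m\times q,|M|\}$ minorities and then fills the remaining slots with the top-scoring workers from $W$ still available. So (i) if $|M|\geq m\times q$, the hired set contains at least $m\times q$ minorities, giving the ratio condition; (ii) if $|M|<m\times q$, all of $M$ is hired in the first step, yielding $M\subset\varphi^{SM}(W,\langle q\rangle)$. Static minority fairness is checked in the same way against the three clauses of Definition \ref{def:StaticMinorityFair}: within-group choices are made by score, low-score non-minorities are never hired over higher-score minorities, and a low-score minority is hired over a higher-score non-minority only to reach the reserve quota, in which case the minority ratio of the hired set is at most $m$.

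The second step is to transfer these properties to $\varphi$. Since by hypothesis $\varphi(W,\langle q\rangle)=\varphi^{SM}(W,\langle q\rangle)$ for every $W$ and every single-round quota $\langle q\rangle$, $\varphi$ inherits respect for static minority rights and static minority fairness from $\varphi^{SM}$ verbatim, because both definitions only constrain $\varphi(W,\langle q\rangle)$ on single-round inputs.

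The third and final step is to apply Theorem \ref{thm:SAMAggIndep}: $\varphi$ respects static minority rights, is static minority fair, and is aggregation independent, so for every $W$ and every sequence $\lambda$, $\varphi(W,\lambda)=\varphi^{SA}(W,\lambda)$. There is no genuine obstacle here; the only care needed is the routine check in Step 1, and in particular the slight subtlety that Definitions \ref{def:RespectsStaticMinorityRights} and \ref{def:StaticMinorityFair} are stated only for one-round inputs so that the equality hypothesis of the corollary is strong enough to transfer them to $\varphi$ in full.
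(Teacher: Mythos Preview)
Your proof is correct and follows essentially the same route as the paper: the paper observes (see the sentence preceding the corollary, together with Remark~\ref{remark:singleRoundMinorityReserves}) that the single-round output of $\varphi^{SM}$ coincides with that of $\varphi^{SA}$ and hence respects static minority rights and is static minority fair, so any $\varphi$ agreeing with $\varphi^{SM}$ on single rounds inherits these properties and Theorem~\ref{thm:SAMAggIndep} applies. Your Step~1 is exactly this verification, and Steps~2--3 are the transfer and invocation of the theorem.
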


\section{Hiring rules around the world}
\label{sec:HiringRulesAroundTheWorld}

In the following sections, we present the rules currently being used
in France, Brazil, and Australia, and show that they suffer from different
issues.  

\subsection{Public sector workers in France\label{subsec:FrenchCase}}

By law, every vacancy in the French public sector must be filled through
an\emph{ open competition}. When vacancies are announced, a document
explaining deadlines, job specifications, and the criteria that will
be used to rank the applicants is published. Workers who satisfy some
stated requirements then take written, oral, and/or physical
exams. In some cases, diplomas or other certifications can also be
used for evaluating the workers. At the end of this process, all workers' results in these tests are combined in a predetermined
way, to produce a ranking over all workers. If the number of vacancies
announced was $q$, then the top $q$ workers are hired. An additional
number of workers are put on a ``waiting list.'' These workers may be hired if some of the top $q$ workers reject
the job offer or if additional vacancies need to be filled before
a new open competition is set.

The French law also establishes that at least 6\% of the vacancies
should be filled by people with physical or mental disabilities. Instead
of incorporating the selection of those workers into the hiring procedure
in a unified framework, the institutions instead open, with unclear
regularity, vacancies exclusive for workers who have those disabilities.\footnote{In some cases, different procedures are also used, such as making candidates with disabilities compete for the same vacancies as those without disabilities, but giving them a ``bonus'' in their scores. Similar methods are used in affirmative action policies elsewhere, such as in local universities in Brazil. In this paper, however, we focus on examples involving quotas and reserved vacancies.} 
The hiring of workers over time continues following the same procedure
as the open positions described above. However, nothing prevents workers
with disabilities from applying for open positions. In
fact, the authorities provide some accommodation for these workers
during the selection, such as, for example, allowing for extra time
to write down the exams. These are meant as an attempt to make
up for some disadvantages that those workers have with respect to
those without disabilities, and not to give any advantage.

Let $W^{*}$ be the set of workers who applied for the open competitions, and $M^{*}$ those who applied for the competitions reserved for candidates with disabilities. Workers in $M^{*}$ \textit{must} have a disability, but workers with disabilities might also apply for open competitions. Therefore, $M^{*}\subseteq M$, and in general $M^{*}\cap W^{*}\neq\emptyset$. Since these constitute different competitions, there are scores for the workers in each competition, and more specifically, a worker who applies to both competitions might obtain different scores in each. Therefore, we denote by $s_{w}^{O}$ the score obtained by worker $w$ in the open competition and by $s_{w}^{D}$ the score that worker $w$ obtained in the competition for workers with disabilities. Since $W^{*}$ and $M^{*}$ are usually different sets of workers, some workers might have a score in only one competition, but some workers with disabilities might have in both.

The number of vacancies that are open for workers with disabilities, and when they are opened, is not determined by any law and is mostly done in an ad hoc manner. To evaluate the consequences of the method used in France in a formal way, however, we will consider two alternative policies.\footnote{We have no evidence that any of these policies constitute actual practice by French institutions, but we believe that they represent the two most natural attempts at satisfying the legal requirements under the current rules.} In both cases, we will assume that there are two pools of workers, $W^{*}$ and $M^{*}$, and given a number of positions to be hired $q$, a total of $q$ workers from these pools must be hired. In what follows, we consider an arbitrary sequence of hires $\langle q_1,q_2,\ldots,\rangle$.

\textbf{Policy 1}: This policy consists of first hiring the top $q_{1}$
workers from $W^{*}$ and then adjusting the number of workers in
$M^{*}$ hired in later rounds. For example, say that $q_{1}=100$,
but only four workers among the top $100$ workers in $W^{*}$ (with
respect to $s_{W}^{O}$) hired have disabilities. Then, considering
the objective of hiring at least 6\% workers with disabilities, if
$q_{2}=50$, then open five vacancies exclusive for workers in $M^{*}$
(selected with respect to $s_{W}^{D}$) and $45$ for those in $W^{*}$
(selected with respect to $s_{W}^{O}$). As a result, by the end of
the second round, at least $9$ workers with disabilities, or $m\times(q_{1}+q_{2})$,\footnote{For simplicity, here and in the rest of the main text we will assume
that every expression involving numbers of workers or vacancies are
integers. In the appendix, we relax that restriction and show that
none of the results presented depend on that.} will be hired.

\textbf{Policy 2}: This policy consists of first hiring $m\times q_{1}$
from $M^{*}$ (selected with respect to $s_{W}^{D}$), $(1-m)\times q_{1}$
workers from $W^{*}$ (selected with respect to $s_{W}^{O}$) and
then adjusting the number of workers in $M^{*}$ hired in later rounds.
For example, say that $q_{1}=100$. Then the policy will result in
hiring six workers from $M^{*}$ and $94$ from $W^{*}$ in the first
round. At least $6\%$ of the workers hired would be among those with
a disability, therefore, but potentially more. Suppose that eight workers
with disabilities were hired in the first round and that $q_{2}=50$.
Then in the second round, two vacancies exclusive for workers in
$M^{*}$ would be open, and the remaining $48$ would be open for
all workers in $W^{*}$.

While the two policies above represent what we believe are the best efforts to satisfy the objectives stated in the law under the current existing procedures, Policy 1 differs in that when only one round of hiring is done, the proportion of workers with a disability hired might be below the minimal proportion stated in the law. This fact will more evident in Proposition \ref{prop:FrenchAssignment}.

Whenever necessary, we will refer to the rules defined by policies 1 and
2 by $\varphi^{F_{1}}$ and $\varphi^{F_{2}}$. Since under the French
assignment rule each worker may have one or two scores, what constitutes
minority fairness is less clear in this context. However, the example below
shows that policy 2 may lead to outcomes that clearly violate
the spirit of minority fairness.

\begin{example}
Let $W^{*}=\{w_{3},w_{4},w_{5}\}$ and $M^{*}=\{w_{1},w_{2},w_{3}\}$,
with scores $s_{W}^{O}=(50,40,30)$ and $s_{W}^{D}=(50,40,30)$ and
$m=0.5$. If $q=2$, then $\varphi^{F_{2}}\left(\left\{ W^{*},M^{*}\right\} ,q\right)$
will select $\left\{ w_{1},w_{3}\right\} $. Worker $w_{2}$, however,
has a disability and a better score than $w_{3}$ in the competition
in which both participated.

Notice that if worker $w_{2}$ also applied for the open vacancies
and in that competition obtained a score that is also better than
the one obtained by $w_{3}$, she would have been hired instead of
$w_{3}$. If the relative rankings of the workers in both competitions
are different, more intricate violations of the spirit of minority
fairness can also occur. If we make the (strong) assumptions
that all workers with disabilities apply to both competitions and
that the relative rankings between those workers in both competitions
are the same, we can obtain a clear distinction between both
policies, as shown below.
\end{example}
\begin{prop}
\label{prop:FrenchAssignment} Suppose that $M^{*}\subseteq W^{*}$
and that for every $w,w'\in M^{*}$, $s_{w}^{O}>s_{w'}^{O}\iff s_{w}^{D}>s_{w'}^{D}$.
Policy 1 of the French assignment rule does not respect minority rights
and is not aggregation independent. Policy 2 respects minority rights,
is aggregation independent, and minority fair.
\end{prop}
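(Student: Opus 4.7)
The strategy is to use the hypotheses $M^{*}\subseteq W^{*}$ and the consistency of $s^{O}$ and $s^{D}$ on $M^{*}$ to reduce the two-pool French problem to a standard one-pool problem with workers $W^{*}$, minority set $M^{*}$, and single score $s:=s^{O}$. The key observation is that whenever either policy calls for selecting the top $k$ workers of $M^{*}$ under $s^{D}$, the assumption forces this to be the same set as the top $k$ workers of $M^{*}$ under $s^{O}$, so each policy is equivalent to a minority-reserve-style rule on the reduced problem with the single score $s^{O}$. I would make this reduction the very first step.

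For Policy 1, both failures can be exhibited on one and the same instance. Take $W^{*}$ such that the top $q$ workers of $W^{*}$ (under $s^{O}$) contain strictly fewer than $m\times q$ members of $M^{*}$, while $|M^{*}|$ is large enough that minority rights could in principle be respected. Applying $\varphi^{F_{1}}$ with the one-round sequence $\langle q\rangle$ selects exactly those top $q$ workers and produces a minority ratio strictly below $m$, killing minority rights. Running the same instance under a nontrivial split $\langle q_{1},q-q_{1}\rangle$ triggers the policy's corrective second round, which by construction replaces some non-minority hires with members of $M^{*}$; this immediately gives a different hire set than $\langle q\rangle$, so aggregation independence fails on the same witness.

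For Policy 2, the aim is to identify $\varphi^{F_{2}}$ with $\varphi^{SA}$ on the reduced problem, after which the three conclusions are automatic. Round one of Policy 2 picks the top $m\times q_{1}$ members of $M^{*}$ by $s^{D}$ (equivalently by $s^{O}$) and then the top $(1-m)\times q_{1}$ members of the remaining pool under $s^{O}$, which is exactly the first round of $\varphi^{SM}$ and hence of $\varphi^{SA}$. For rounds $r\geq 2$, the "adjustment'' of how many slots are reserved for $M^{*}$ is the one governed by $m^{r}=\max\{m-\omega(A^{r})/\sum_{i\leq r}q_{i},0\}$ in the definition of $\varphi^{SA}$. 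Once the coincidence $\varphi^{F_{2}}=\varphi^{SA}$ is established, respecting minority rights, aggregation independence, and minority fairness are immediate from the construction of $\varphi^{SA}$ and from Theorems \ref{thm:MSPMREquivMinorityRIghtsFairConsistent} and \ref{thm:SAMAggIndep} together with Corollary \ref{cor:SAMIsMinorityReserves+AI}.

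The main obstacle, as I see it, is pinning down the precise form of Policy 2's informal "adjustment in later rounds'', which the text presents through an illustrative example rather than an explicit recursion. I would handle this by observing that the only adjustment compatible with (i) the first-round recipe common to $\varphi^{F_{2}}$ and $\varphi^{SM}$ and (ii) the legal requirement of achieving a minority ratio of at least $m$ after every round is the recursive one defining $\varphi^{SA}$; Corollary \ref{cor:SAMIsMinorityReserves+AI} then forces the full rule once aggregation independence is imposed. A secondary subtlety is that Definition \ref{def:MinorityFair} references a single score profile, whereas the French setup carries two; the reduction carried out in the first paragraph justifies taking $s=s^{O}$ throughout and so makes the application of the definition unambiguous.
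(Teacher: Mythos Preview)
Your proposal is correct and follows essentially the same approach as the paper: a counterexample for Policy~1 and the identification of Policy~2 with $\varphi^{SA}$ under the stated hypotheses. Two minor remarks. First, your idea of exhibiting both failures of Policy~1 on a single instance is a slight streamlining of the paper, which uses two different choices of $M^{*}$ over the same $W^{*}$; your version works (e.g.\ with $M^{*}=\{w_{4},w_{5}\}$ and $q=4$ one gets a minority ratio of $1/4<0.5$, and splitting into $\langle 2,2\rangle$ yields $\{w_{1},w_{2},w_{4},w_{5}\}\neq\{w_{1},w_{2},w_{3},w_{4}\}$). Second, the ``obstacle'' you flag about the precise adjustment formula is not really one: the appendix gives the French rule's round-$k$ reserve as $q_{k,1}=\min\{\max\{m\sum_{i\leq k}q_{i}-\omega(TA_{k,1}),0\},|M_{k,1}|\}$, which matches $\varphi^{SA}$ directly under your reduction, so no indirect argument via Corollary~\ref{cor:SAMIsMinorityReserves+AI} is needed.
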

 It is crucial to notice, however, that the result in proposition \ref{prop:FrenchAssignment}
depends on the \textbf{relative rankings of the workers with disabilities
being the same in both competitions}, but perhaps most importantly,
on\textbf{ workers with disabilities participating in both competitions}.
This is not a minor issue, since these competitions often involve
a significant amount of time and effort.

\subsection{Quotas for black public sector job workers in Brazil\label{subsec:BrazilCase}}

The rules for the hiring of public sector workers in Brazil work
essentially in the same way as in France: vacancies are filled with
open competitions that result in scores associated with the workers,
and workers are hired in each period by following their scores in
descending order. Differently from France, however, there is no quota
for workers with disabilities, but instead, since 2014, there are
quotas for black workers.

The use of racial and income-based quotas has increased significantly
in many areas of the Brazilian public sector and higher education.
At least 50\% of the seats in federal universities, for example, are
reserved for students who are black, low-income, or studied in a public
high-school \citep{aygun2013college}. Many municipalities also employ
quotas for black workers in the jobs that they offer. One of the most
significant recent developments, however, is a law which establishes
that 20\% of the vacancies offered in each job opening should give
priority to black workers.\footnote{Lei N. 12.990, de 9 de junho de 2014.}

Differently from France, the quotas for black workers are explicitly
incorporated into the hiring process. More specifically, the rule
currently used in Brazil (denoted the $\varphi^{B}$ rule) works as
follows. Let $k$ be a number that is higher than any expected number
of hires to be made.
\begin{lyxlist}{00.00.0000}
\item [{\textbf{Initial step}}] Workers are partitioned into two groups:
(i) Top Minority ($TM$) and (ii) Others ($O$). The $TM$ group consists
of the highest scoring $\lceil m\times k \rceil$ workers from $M$, and $O$
be the top $k-\lceil m\times k \rceil$ workers in $W\backslash TM$.\footnote{Notice that the set $W\backslash TM$, in general, contains both minority and non-minority workers. As a result, if there are not enough minority workers, the remaining positions
are filled with the top non-minority workers.} Let $TM^{1}=TM$ and $O^{1}=O$.
\item [{\textbf{Round $r\geq 1$}}] The $\lceil m\times q_{r}\rceil$ top scoring
minority workers from $TM^{r}$, and the top $q-\lceil m\times q\rceil$ workers
from $O^{r}$ are hired. By removing these workers hired, we obtain
$TM^{r+1}$ and $O^{r+1}$.
\end{lyxlist}
For the Brazilian law specifically, $m=0.2$. The example below shows that the Brazilian rule is not minority fair.

\begin{example}\label{example:BrazilNotMinorityFair}
Let $W=\left\{ w_{1},w_{2},w_{3},w_{4}\right\} $, $M=\left\{ w_{1},w_{2}\right\} $,
and $s_{W}=\left(100,90,80,50\right)$. Let $q=2$, $m=0.5$ and $k=4$.
Then $TM^{1}=\left\{ w_{1},w_{2}\right\} $ and $O^{1}=\left\{ w_{3},w_{4}\right\} $.
The Brazilian rule states that, when hiring two workers, the top
worker from $TM^{1}$ and the top from $O^{1}$ should be hired. Therefore,
$\varphi^{B}\left(W,q\right)=\left\{ w_{1},w_{3}\right\} $. Since
$w_{2}\notin\varphi^{B}\left(W,q\right)$ and $s_{w_{2}}>s_{w_{3}}$,
the Brazilian rule is not\textit{ minority fair}.\footnote{One may conjecture that the scenario above is very unexpected, since
the affirmative action law must have been enacted in response to minority
workers not being hired based solely on scores. As shown in \cite{aygun2013college},
however, this conjecture may be misleading. For example, even if the
average score obtained by minority workers is lower, one can have
situations in which the preferences of the higher achieving minority
workers are correlated, leading to the top minority workers in the
entire population applying to a specific job.}
\end{example}
Notice that in this example, worker $w_{2}$, who is part of the minority,
has a higher score than $w_{3}$, who is not a minority. Worker $w_{3}$
is hired, while $w_{2}$ is not. Given that the affirmative action
rules were designed with the intent of increasing the access that
minorities have to these jobs, this type of lack of fairness is especially
undesirable, since if the hiring process was purely merit-based, worker
$w_{2}$ would have been hired.

\cite{aygun2013college} describe the implementation of affirmative
action in the admission to Brazilian public universities. There, as
here, the problems arise from the fact that positions (in that case
seats) and workers are partitioned between those reserved for minorities
and the open positions. Differently from there, however, unfair outcomes
may not be prevented by workers even if they strategically manipulate
their minority status. In the example above, even if $w_{2}$ applied
as a non-minority he would not be hired.
\begin{prop}
\label{prop:BrazilianRuleNotMinorityFair}The Brazilian rule is aggregation
independent and respects minority rights. However, it is not minority
fair.
\end{prop}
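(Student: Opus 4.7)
The plan is to verify each of the three assertions in turn, treating aggregation independence and respect for minority rights directly from the structure of $\varphi^{B}$, and appealing to Example \ref{example:BrazilNotMinorityFair} for the failure of minority fairness.

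For aggregation independence, the key observation I would exploit is that the initial partition $(TM,O)$ depends only on $W$, not on how the total number of hires is split across rounds. Working under the paper's standing assumption that $mq_r$ is integer-valued, round one with $q_1$ hires the top $mq_1$ workers from $TM$ and the top $(1-m)q_1$ workers from $O$; round two then hires the top $m(q-q_1)$ from the remainder of $TM$ and the top $(1-m)(q-q_1)$ from the remainder of $O$. Because workers inside $TM$ and inside $O$ are ranked by score, hiring the top $mq_1$ and then the top $m(q-q_1)$ picks out exactly the top $mq$ workers from $TM$, and analogously for $O$. This coincides with $\varphi^{B}(W,q)$ in a single round. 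A straightforward induction then extends this to any finite sequence summing to $q$, yielding aggregation independence.

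For respect of minority rights, in round $r$ the rule hires $mq_r$ workers from $TM\subseteq M$. Summing over any sequence $\lambda=\langle q_1,\ldots,q_t\rangle$ yields at least $m\sum_r q_r$ minority hires out of $\sum_r q_r$ total, so $\omega(\varphi^{B}(W,\lambda))/|\varphi^{B}(W,\lambda)|\geq m$, establishing condition (i) of Definition \ref{def:RespectsMinorityRights} whenever $|M|\geq m\sum_r q_r$. The edge case $|M|<m\sum_r q_r$ is handled by noting that the construction of $TM$ in the initial step forces all minority workers into $TM$ (capped by $|M|$), and the rule draws from $TM$ before turning to $O$, so $M\subseteq\varphi^{B}(W,\lambda)$, which is condition (ii).

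For the failure of minority fairness, Example \ref{example:BrazilNotMinorityFair} already exhibits a single pair $(W,M)$ and $q$ with $w_2\in M$, $w_3\notin M$, $s_{w_2}>s_{w_3}$, $w_3\in\varphi^{B}(W,q)$ and $w_2\notin\varphi^{B}(W,q)$, directly violating condition (ii) of Definition \ref{def:MinorityFair}. A single such instance is sufficient to refute the universally quantified property, and I would simply cite the example. The only subtlety in the whole argument is the interaction between the ceiling operator and the integer assumption for $mq_r$, which the paper's standing convention defuses; the treatment of the auxiliary set $A$ in the definition of aggregation independence is equally routine under the natural convention that $TM^1$ and $O^1$ are obtained from $TM$ and $O$ by removing workers already in $A$, so no further work is needed.
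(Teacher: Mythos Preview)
Your proposal is correct and follows essentially the same approach as the paper. Both arguments rest on the observation that the partition $(TM,O)$ is fixed once and for all, so $\varphi^{B}$ behaves as two parallel sequential priority rules, which yields aggregation independence immediately; minority rights then follow by counting hires out of $TM\subseteq M$, and the failure of minority fairness is established by the same Example~\ref{example:BrazilNotMinorityFair}. Your treatment of the edge case $|M|<m\sum_r q_r$ goes slightly beyond the paper, which simply invokes the standing assumption that there are always enough minority workers.
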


\subsection{Gender balance in the hiring of firefighters in New South Wales\label{subsec:NSWGenderEqualityRule}}

The hiring of firefighters in the Australian province of New South
Wales attempts to achieve a gender-balanced workforce by following
a simple rule:
\begin{quote}
``Candidates who have successfully progressed through the recruitment
stages may then be offered a place in the Firefighter Recruitment
training program. Written offers of employment will be made to an
equal number of the most meritorious male and female candidates based
on performance at interview and the other components of the recruitment
process combines.''\footnote{Source: Fire \& Rescue NSW (https://www.fire.nsw.gov.au/page.php?id=9126)}
\end{quote}
We denote this rule by\emph{ NSW rule}, or $\varphi^{NSW}$. Although not stated explicitly in the institution's website, we will assume that if there are not enough individuals of some gender, the remaining hirings will be made among those candidates available, based on their scores. Moreover, to avoid results that rely simply on whether $q$ is odd or even, we assume it is always even. The example
below shows the problems involved in that rule:
\begin{example}\label{example:NSWNotfairNotMinorityFair} 
Let $W=\left\{ w_{1},w_{2},w_{3},w_{4}\right\} $, and $W^{F}=\left\{ w_{1},w_{2}\right\} $
and $W^{M}=\left\{ w_{3},w_{4}\right\} $ be the set of female and
male workers, respectively. Suppose that the scores are $s_{W}=\left(100,90,80,50\right)$.
Let $q=2$. Then $\varphi^{NSW}\left(W,q\right)=\left\{ w_{1},w_{3}\right\} $.
Since $w_{2}$ is not hired but $s_{w_{2}}>s_{w_{3}}$, the NSW rule
is not fair. Moreover, it is easy to see that if either gender is considered
a minority, the rule is also not minority fair.
\end{example}
The result below summarizes the properties of the NSW rule.
\begin{prop}
The NSW rule is aggregation independent but not fair. If one of the
genders is deemed as a minority, then it respects minority rights
but is not minority fair.\label{prop:NSWRuleProperties}
\end{prop}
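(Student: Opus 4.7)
The plan is to verify each of the four assertions in turn, leveraging the product structure of $\varphi^{NSW}$: when $q$ is even and enough candidates of each gender are available, the rule returns the union of the top $q/2$ workers of $W^{F}$ and the top $q/2$ workers of $W^{M}$, with the fallback that if one gender is exhausted, the remaining slots are filled from the other pool in decreasing order of scores.

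For \textbf{aggregation independence}, I would argue directly that $\varphi^{NSW}(W,A,q)$ and $\varphi^{NSW}(W,A,\langle q_{1},q-q_{1}\rangle)$ produce the same set. In the two-round case, round 1 removes the top $q_{1}/2$ workers of each gender, and round 2 then extracts the next $(q-q_{1})/2$ of each gender from the residual pool; the union equals the top $q/2$ of each gender, which is precisely the single-round outcome. The same telescoping argument extends to the exhaustion case, since the fallback is score-based within the remaining single-gender pool and both schedules eventually hire the same $q-|W^{F}|$ or $q-|W^{M}|$ replacement workers.

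For \textbf{not fair} and \textbf{not minority fair}, Example \ref{example:NSWNotfairNotMinorityFair} already does the work. Fairness fails because $w_{2}$ has a strictly higher score than the hired $w_{3}$ but is not hired. For minority fairness, declaring $M=W^{F}$ yields $w_{3}\in W\setminus M$ with $w_{3}\in H$ and $w_{2}\in M$ with $s_{w_{2}}>s_{w_{3}}$ but $w_{2}\notin H$, a direct violation of clause $(ii)$ of Definition \ref{def:MinorityFair}.

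For \textbf{respects minority rights}, assuming the minority ratio satisfies $m\leq 1/2$ (otherwise a $50/50$ split cannot meet the target and the claim is vacuous), I would split by whether $|M|$ is larger or smaller than $\tfrac{1}{2}\sum q_{i}$. When the minority pool is large enough, the symmetry of $\varphi^{NSW}$ ensures that in every round exactly $q_{r}/2$ hires come from the minority gender, so after any number of rounds the cumulative minority proportion equals $1/2\geq m$, satisfying clause $(i)$ of Definition \ref{def:RespectsMinorityRights}. When $|M|<\tfrac{1}{2}\sum q_{i}$ the fallback kicks in once the minority pool is exhausted and all remaining minorities are hired before any additional non-minority of the exhausted gender; the realized minority proportion is then $|M|/\sum q_{i}$, which is $\geq m$ precisely in case $(i)$ and otherwise case $(ii)$ delivers $M\subseteq\varphi^{NSW}(W,\lambda)$. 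The main obstacle is bookkeeping through the exhaustion case, but it reduces to observing that the fallback is monotone: once the top of one pool is consumed, every subsequent hire from that gender is strictly by score.
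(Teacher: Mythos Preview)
Your proposal is correct and follows essentially the same route as the paper: Example~\ref{example:NSWNotfairNotMinorityFair} dispatches the two negative claims, and the parallel-sequential-priority structure of $\varphi^{NSW}$ yields aggregation independence and the $50/50$ split gives minority rights. The paper's own proof is considerably terser because it invokes the standing assumption (stated in Section~\ref{sec:Hiring-by-rules}) that the number of men and women is always large enough; under that convention the exhaustion/fallback case never arises, so your additional bookkeeping there, while not incorrect, is unnecessary for the paper's purposes. One small remark: your parenthetical that for $m>1/2$ ``the claim is vacuous'' is slightly off---with the paper's large-pool assumption the rule would actually \emph{fail} minority rights in that regime rather than satisfy it trivially---so the implicit restriction to $m\le 1/2$ is a genuine hypothesis, consistent with the NSW rule's $50/50$ target.
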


\section{Single-round hiring}
\label{sec:SingleRoundHiring}

Until now, we evaluated rules from the perspective of whether they satisfy the desirable properties we introduced in the previous sections: aggregation independence, respecting minority rights, and minority fairness. While our analysis focuses on hirings potentially involving multiple rounds, one might wonder whether some of these problems would be present when the hiring is made in a single round. 

The property of aggregation independence does not imply anything regarding a single round of hiring. As we mentioned in section \ref{sec:SequentialAdjustedMinorReserves}, since the properties of respecting minority rights and minority fairness are only satisfied when they are satisfied for any number of rounds, the rules for which they are satisfied will also satisfy them for a single round of hiring. For the French policies, the results do not change.

\begin{rem}
\label{remark:singleRoundFrench}
Suppose, as in Proposition \ref{prop:FrenchAssignment}, that $M^*\subseteq W^*$, and that for every $w,w'\in M^*$, $s_w^O>s_{w'}^O\iff s_w^D>s_{w'}^D$. Policy 1 of the French assignment rule does not respect static minority rights, and Policy 2 respects static minority rights and is static minority fair.
\end{rem}

Next, consider the (sequential) use of minority reserves. When only a single round of hiring occurs, the rule satisfies all the desirable characteristics. Moreover, as we mentioned in section \ref{sec:SequentialAdjustedMinorReserves}, when there is a single round of hiring, it is equivalent to the sequential adjusted minority reserves.

\begin{rem}
\label{remark:singleRoundMinorityReserves}
The (sequential) use of minority reserves respects static minority rights and is static minority fair.
\end{rem}

The problems with minority reserves are not present under single hiring. As shown in Example \ref{example:sequentialMinorityReserves}, the issues reside on the fact that minority fairness requires that an asymmetric priority is given to minority candidates only when their proportion among those hired is below $m$. The sequential use of minority rights ``lacks memory'', in the sense that it always gives this asymmetric priority to minority workers, regardless of how much it is still needed given past hires. When only one round of hiring occurs, that is not a problem.

Regarding the Brazilian rule, the variable that determines its characteristics is $k$, that is, what is the number of workers from $W$ what will be put in the sets $TM$ and $O$. To see this, let $q_1$ be the number of hires in the single-round hiring, and let moreover $k=q_1$. One can easily verify that the hiring that will be made is the same as the one done by the (sequential) use of minority reserves. If $k>q_1$, on the other hand, we can have the situation shown in Example \ref{example:BrazilNotMinorityFair}. Therefore:

\begin{rem}
\label{remark:singleRoundBrazil}
Let $q_1$ be the number of hires that occur in a single round using the Brazilian rule. If $k=q_1$, then the rule respects static minority rights and is static minority fair. If $k>q_1$, then the rule respects static minority rights, but is not static minority fair.
\end{rem}

Finally, since the negative results for the NSW rule are based on a single round of hiring, we have the following remark:

\begin{rem}
\label{remark:singleRoundNSW}
When only one round of hiring occurs, the NSW rule is not fair. If one of the genders is deemed as a minority, then it respects static minority rights but is not static minority fair.
\end{rem}

\section{Multiple institutions\label{sec:Multiple-institutions}}

While often the hiring processes that we describe involve one or more positions in a single job specification -- and therefore the pool of applicants, or reserve list, being used only for that position\footnote{As we noted in section \ref{sec:Introduction}, most of the hiring in the public sector in France and Brazil, for example, requires the use of order of merit lists and reserve lists. While for some positions, such as police officers, it is natural to expect that the workers might be matched to different locations, many other positions are more specific and do not result in a pool of candidates shared by more than one job. Examples include the hiring of doctors with a specific specialty for a municipality with a single hospital, the role of economist in state companies, who only work at the headquarters, the entry-level diplomatic career, etc.} -- in many cases, a pool of workers is shared between multiple institutions
or locations. For example, in the hiring process for the Brazilian Federal police, workers may be allocated to different locations.\footnote{Source: Brazilian Department of Federal Police.}
In the selection process for the New Zealand police, the candidate's
preference is also taken into account when deciding which district
a worker who will be hired from the pool will go to:
\begin{quotation}
``The candidate pool is not a waiting list. The strongest candidates
are always chosen according to the needs and priorities of the districts.
The time it takes to get called up to college depends on your individual
strengths and the constabulary recruitment requirements in your preferred
districts. (...) We will look to place you into your preferred district
but you may also be given the option to work in another district where
recruits are needed most.''\footnote{Source: New Zealand Police (https://www.newcops.co.nz/recruitment-process/candidate-pool),
accessed in March 8th 2018.}
\end{quotation}
In this section, we evaluate how the fact that workers may be hired
by more than one institution affects the attainability of basic desirable
properties. Now, in addition to the set of workers $W$, there is also
a set of institutions $I=\left\{ i_{1},\ldots,i_{\ell}\right\}$, where $|I|\geq 3$.
Institutions make sequences of hires, and there is no simultaneity
in their hires: in each round only one institution may hire workers.
Therefore, when we describe a round, we now must determine not only
how many workers are hired, but also which institution those workers
will be assigned to. Some additional notation will be, therefore, necessary. A\textbf{
matching} $\mu$ is a function from $I\cup W$ to subsets of $I\cup W$
such that:
\begin{itemize}
\item $\mu\left(w\right)\in I\cup\left\{ \emptyset\right\} $ and $\left|\mu\left(w\right)\right|=1$
for every worker $i$,\footnote{We abuse notation and consider $\mu\left(w\right)$ to be an element
of $I$, instead of a set with an element of $I$.}
\item $\mu\left(i\right)\subseteq W$ for every institution $i$,
\item $\mu\left(w\right)=i$ if and only if $w\in\mu\left(i\right)$.
\end{itemize}
At the end of each round $r\geq1$, we define the\textbf{ matching}
of workers to institutions as a function $\mu^{r}$.

A \textbf{plural sequence of hires} $\Lambda$ is a list of pairs $\left(i,q\right)$, where
$i\in I$, and $q$ is the number of workers hired. A
plural sequence of hires $\Lambda=\langle\left(i_{1},3\right),\left(i_{3},2\right),\left(i_{1},2\right)\rangle$
, for example, represents the case in which in the first round institution
$i_{1}$ hires three workers, in the second round institution $i_{3}$
hires two workers, and then in the third round institution $i_{1}$
hires two workers. 

When considering hiring with multiple institutions, a rule, therefore, can be generalized to produce matchings instead of allocations. Given a pool of workers $W$, an initial matching $\mu^0$, and a plural sequence of hires $\Lambda=\langle (i_1,q_1),(i_2,q_2),\ldots,(i_k,q_k) \rangle$, a hiring rule $\Phi$ is derived from a set of institutional rules $\left(\Phi_i\right)_{i\in I}$ by returning the matching combining all institutional rules. That is, if $\Phi\left(W,\mu^0,\Lambda\right)=\mu$, then $\mu(i)=\Phi_i\left(W,\mu^0,\Lambda\right)$. Given $W$ and some matching $\mu^{t-1}$,  $\Phi_i\left(W,\mu^{t-1},\langle(i,q)\rangle\right)\subseteq W\backslash\bigcup_{i\in I}\mu^{t-1}(i)$. That is, it selects workers from $W$ who are not yet matched to some institution in $\mu^{t-1}$. Moreover:

\[\Phi_i\left(W,\mu^0,\Lambda\right)= \bigcup_{t=1}^{k} \Phi_i\left(W,\mu^{t-1},\langle(i_{t},q_{t})\rangle\right)\]

For any $t>0$, the matching $\mu^t$ is such that for all $i\neq i_t$,  $\mu^t(i)=\mu^{t-1}(i)$, but $\mu^t(i_t)=\mu^{t-1}(i_t)\cup \Phi_{i^t}\left(W,\mu^{t-1},\langle(i_{t},q_{t})\rangle\right)$. We assume that $\Phi_i\left(W,\mu,\langle(i',q)\rangle\right)=\emptyset$ whenever $i\neq i'$. That is, one institution cannot ``hire for another institution''. We also assume that if $\mu$ and $\mu'$ are such that $\cup_{i\in I}\mu(i)=\cup_{i\in I}\mu'(i)$ and $\mu(i^*)=\mu'(i^*)$, then $\Phi_{i^*}\left(W,\mu,\langle(i^*,q)\rangle\right)=\Phi_{i^*}\left(W,\mu',\langle(i^*,q)\rangle\right)$. That is, an institution $i^*$'s hiring decision can depend on the set of workers who were not yet hired and the workers who were already hired by $i^*$, but not on the identity of the workers who were hired by each other institution. 

Let $\mu^{\emptyset}$ be a matching in which for all $i\in I$, $\mu^{\emptyset}(i)=\emptyset$. We denote by $\Phi_i\left(W,\Lambda\right)$ and $\Phi\left(W,\Lambda\right)$ the values of $\Phi_i\left(W,\mu^{\emptyset},\Lambda\right)$ and $\Phi\left(W,\mu^{\emptyset},\Lambda\right)$. Finally, we abuse notation and if $\Lambda=\langle (i_1,q_1),(i_2,q_2),\ldots,(i_k,q_k)\rangle$, we can append the plural sequence of hires with the notation $\langle \Lambda,(i^*,q^*)\rangle\equiv\langle (i_1,q_1),(i_2,q_2),\ldots,(i_k,q_k),(i^*,q^*)\rangle$.

The example below clarifies these points.
\begin{example}
Consider a set of workers $W=\left\{ w_{1},w_{2},w_{3},w_{4},w_{5}\right\} $
with scores $s_{W}=\left(100,90,80,50,20\right)$, a set of institutions
$I=\left\{ i_{1},i_{2},i_{3}\right\} $ and let $\Phi$ be a rule
that, in any round, matches the highest scoring workers to the institution
in that round. Then if $\Lambda=\langle\left(i_{1},1\right),\left(i_{3},2\right),\left(i_{1},1\right)\rangle$, the matchings $\mu^{1}$, $\mu^{2}$ and $\mu^{3}$ produced at the
end of each round are:

\[
\mu^{1}=\begin{pmatrix}i_{1} & i_{2} & i_{3}\\
w_{1} & \emptyset & \emptyset
\end{pmatrix}\ \ \ \mu^{2}=\begin{pmatrix}i_{1} & i_{2} & i_{3}\\
w_{1} & \emptyset & \left\{ w_{2},w_{3}\right\} 
\end{pmatrix}\ \ \ \mu^{3}=\begin{pmatrix}i_{1} & i_{2} & i_{3}\\
\left\{ w_{1},w_{4}\right\}  & \emptyset & \left\{ w_{2},w_{3}\right\} 
\end{pmatrix}
\]
\end{example}
We will consider two properties for rules when there are multiple
institutions. The first is related to the desirability of workers.

\begin{defn}
A rule $\Phi$ satisfies\textbf{ common top} if there exists a
worker $w^{*}\in W$ such that, for every institution $i\in I$, $w^{*}\in\Phi_i\left(W,\langle(i,1)\rangle\right)$.
\end{defn}

In words, common top requires that there is at least one worker that,
whenever available, every institution would hire.

Next, we consider a weak notion of consistency across the hirings made by the institutions.
\begin{defn}
A rule $\Phi$ satisfies\textbf{ permutation independence} if for
any plural sequence of hires $\Lambda$ and any permutation of its elements
$\sigma\left(\Lambda\right)$, $\bigcup_{i\in I}\Phi_{i}\left(W,\Lambda\right)=\bigcup_{i\in I}\Phi_{i}\left(W,\sigma\left(\Lambda\right)\right)$.
\end{defn}

Permutation independence, therefore, simply requires that the set
of workers hired, regardless of where, should not change if we adjust
the order of hiring.

We also adapt the notion of aggregation independence to multiple institutions by requiring that each institution's rules are aggregation independent.

\begin{defn}
A rule $\Phi$ is\textbf{ aggregation independent} if for any $q\geq q_{1}\geq0$, sets of workers $W$, institution $i\in I$, and matching $\mu$, $\Phi_i\left(W,\mu,\langle (i,q)\rangle\right)=\Phi_i\left(W, \mu,\langle (i,q_1),(i,q-q_1)\rangle \right)$.
\end{defn}

The family of rules that we will use in our next result is elementary
but also very restrictive. 

A rule $\Phi$ is \textbf{single priority} if there exists a strict
ranking $\succ$ of the workers in $W$ such that when $\Lambda$
is any plural sequence of hires in which at least two different institutions make hires, for every $i\in I$:

\[\Phi_{i}\left(W,\langle\Lambda,\left(i,q\right)\rangle\right)=\Phi_{i}\left(W,\Lambda\right)\cup\max_{\succ}^{q}W\backslash\Phi_{i}\left(W,\Lambda\right)\]

Where, given a set $X$,  $\max_{\succ}^{q}X$ is the set with the top $q$ elements of
$X$ with respect to the ordering $\succ$. In words, a rule is single
priority if, whenever more than one institution make hires, all hirings from all institutions consist of hiring the
top workers, among the remaining ones, when all of these institutions
share a common ranking.

The result below shows that, for a wide range of applications, having
multiple institutions is incompatible
with most objectives a policymaker may have.
\begin{thm}
\label{thm:multipleInstitutionsSinglePriority}A rule satisfies common
top, aggregation independence and permutation independence if and only if it is a single priority
rule.
\end{thm}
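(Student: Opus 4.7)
The plan is to prove both directions of the equivalence separately, with the ``if'' direction following from direct verification and the ``only if'' direction requiring the construction of a common priority order $\succ$ from the three axioms.

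For the ``if'' direction, I would let $w^{*}=\max_{\succ}W$ for the ranking $\succ$ given by the single priority definition. Common top follows because the single priority formula, applied to $\langle(i,1)\rangle$, yields $w^{*}\in\Phi_{i}(W,\langle(i,1)\rangle)$ for every institution $i$. Aggregation independence follows because taking the top $q$ still-available workers of $\succ$ in one batch produces the same set as splitting the batch into two consecutive sub-hires by the same institution. Permutation independence follows since, after any stage of hiring, the union of all workers hired is simply the top $\sum_{t}q_{t}$ workers of $\succ$, which is determined only by the total number of hires and not by the order of rounds or which institution hires in each round.

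For the ``only if'' direction, the heart of the argument is the following lemma: for every plural sequence of hires $\Lambda$ whose elements together use at least two distinct institutions, the union $\bigcup_{i\in I}\Phi_{i}(W,\Lambda)$ depends only on the total number $n$ of single hires in $\Lambda$ (where each pair $(i,q)$ is viewed as $q$ consecutive single hires via aggregation independence). I would prove this by induction on $n$. For $n=2$, $\Lambda=\langle(i,1),(j,1)\rangle$ with $i\neq j$: common top forces the first hire to be $w^{*}$, and comparing with the permutation $\langle(j,1),(i,1)\rangle$ via permutation independence shows the second hire is institution-independent; call it $w_{2}$. For the inductive step $n\geq 3$, I would first use permutation independence to rearrange $\Lambda$ so that the first two slots are filled by hires of two distinct institutions, which guarantees that every prefix of length $\geq 2$ remains multi-institution. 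The inductive hypothesis then pins the union of the length-$(n-1)$ prefix to a fixed set $\{w_{1},\ldots,w_{n-1}\}$; comparing all permutations of $\Lambda$ whose last hire is by some institution appearing in $\Lambda$, permutation independence forces the last hire to be the same worker in every such permutation — call this common worker $w_{n}$. Defining $\succ$ by $w_{1}\succ w_{2}\succ\cdots\succ w_{|W|}$, the single priority formula then follows: aggregation independence reduces any extension $\langle\Lambda,(i,q)\rangle$ to $q$ consecutive single hires by $i$, and iterating the lemma shows each new hire extracts the next $\succ$-ranked worker still available.

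The main obstacle will be the inductive step of the lemma, specifically establishing the invariance of the union across multi-institution sequences of equal length but different multisets of institutional identities. Here the assumption $|I|\geq 3$ becomes essential: given two multi-institution sequences $\Lambda_{1}$ and $\Lambda_{2}$ of length $n$ with distinct multisets, the plan is to append short tails consisting of hires by a third institution (and, if needed, of institutions already present) so that the extended sequences share a common multiset, letting permutation independence tie together their unions at length $n+1$ or beyond, from which the equality at length $n$ can be extracted by an ``exchange'' argument using the fact that each step adds exactly one worker. Care will also be needed when $\Lambda$ uses only two of the available institutions, where the multi-institution-prefix rearrangement still works but must be constructed explicitly to avoid any single-institution prefix of length $\geq 2$.
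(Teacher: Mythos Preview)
Your overall architecture matches the paper's: reduce arbitrary hires to single hires via aggregation independence, then induct on the length of single-hire sequences to show the union depends only on the length. The ``if'' direction is fine. The gap is in the ``only if'' direction, precisely at the point you yourself flag as the main obstacle: comparing two multi-institution sequences $\Lambda_1,\Lambda_2$ of the same length but with \emph{different multisets} of institutions.

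Your proposed fix --- append tails $T_1,T_2$ so that $\Lambda_1\cdot T_1$ and $\Lambda_2\cdot T_2$ become permutations of one another, apply permutation independence at the longer length, then ``extract'' equality at length $n$ via an exchange argument --- does not close the gap. Permutation independence at length $n+m$ gives $\{\Lambda_1\cdot T_1\}=\{\Lambda_2\cdot T_2\}$, but $\{\Lambda_1\}$ and $\{\Lambda_2\}$ are merely two size-$n$ subsets of this common $(n+m)$-set; nothing forces them to coincide unless you already know the tail hires agree, which is exactly the statement you are trying to prove. The same issue already appears in your base case: comparing $\langle (i,1),(j,1)\rangle$ with its reversal via permutation independence only shows the second hire is the same for the \emph{pair} $\{i,j\}$, not that it is independent of which pair of institutions is used. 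Permutation independence only relates sequences with identical multisets, so by itself it cannot cross the multiset boundary.

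The paper bridges different multisets using a structural assumption built into the definition of a rule (it labels this (P*) in the proof): an institution's hire depends only on the set of workers already hired overall and on that institution's own past hires, not on which other institution hired whom. With this in hand, the paper builds an explicit chain $\langle\Lambda_1,i_1\rangle\to\langle\Lambda_1^a,i_1\rangle\to\langle\Lambda_1^b,i_3\rangle\to\cdots\to\langle\Lambda_2,i_2\rangle$ in which each link is justified either by permutation independence or by (P*)+induction hypothesis (swapping the identity of institutions in rounds where the next-acting institution's own history is unaffected). This is where the assumption $|I|\geq 3$ does the work: a third institution is needed as a ``placeholder'' in these swaps. Your proposal never invokes (P*), and without it the chain cannot be built; you should incorporate it explicitly rather than relying on tail-appending.
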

Theorem \ref{thm:multipleInstitutionsSinglePriority} is a fundamentally
negative result. It shows that aggregation and permutation independence, both arguably simple desirable characteristics, are incompatible with institutions following different criteria when evaluating these candidates. This is true even when all institutions share the same scores for workers, but institutions might have different values of $m$ (the proportion of minorities that must be hired).\footnote{Take, for example, $I=\{i_1,i_2\}$, $W=\{w_1,w_2,w_3,w_4,w_5\}$, $M=\{w_1,w_2,w_5\}$, $s_{w_1}>s_{w_2}>s_{w_3}>s_{w_4}>s_{w_5}$, and the value of $m$ for the two institutions being $m_1=0.5$ and $m_2=0$. If both institutions use the sequential adjusted minority reserves and $i_1$ hires two workers before $i_2$ also hires two, worker $w_3$ is hired and $w_5$ is not. If the order is that $i_2$ hires first, then $w_5$ is hired and $w_3$ is not. A violation of permutation independence.}

Notice, however, that if the compositional objectives are interpreted as being applied to the entire set of workers hired by these institutions, as a whole, then we can simply use the sequential adjusted minority reserves, as defined in section \ref{sec:SequentialAdjustedMinorReserves}, every time an institution wants to hire a given number of workers. This procedure satisfies aggregation independence and is also permutation independent. Moreover, it satisfies a natural adaptation of what it means to respect minority rights and minority fairness. Instead of applying to whether workers are hired by a specific institution, it applies to being hired at \textit{some} institution.

\section{Conclusion\label{sec:Conclusion}}

In this paper, we evaluate a hiring method that is widely used around
the world, especially for public sector jobs, where institutions select
their workers over time from a pool of eligible workers. While the
simple and natural rule of sequential priority satisfies all desirable
characteristics, the addition of compositional objectives such as
affirmative action policies increases the complexity of the procedures.
We show that the rules being used in practical hiring processes,
as well as the direct application of minority reserves, fail fairness
or aggregation independence. When the compositional objectives can
be modeled as affirmative action for minorities, the sequential adjusted
minority reserves, which we introduced, is therefore the unique solution
that satisfies those desirable properties.

If multiple institutions hire from the same pool of applicants, however,
we show that the space for different hiring criteria between institutions, is highly restricted when a minimal requirement of independence is imposed.


\bibliographystyle{ecca}
\bibliography{CWAB}

\begin{thebibliography}{18}
\providecommand{\natexlab}[1]{#1}

\bibitem[{Abdulkadiro{\u{g}}lu(2005)}]{abdulkadirouglu2005college}
\textsc{Abdulkadiro{\u{g}}lu, A.} (2005). College admissions with affirmative
  action. \textit{International Journal of Game Theory}, \textbf{33}~(4),
  535--549.

\bibitem[{Abdulkadiro{\u{g}}lu and
  S{\"o}nmez(2003)}]{abdulkadirouglu2003school}
\textsc{Abdulkadiro{\u{g}}lu, A.} and \textsc{S{\"o}nmez, T.} (2003). School
  choice: A mechanism design approach. \textit{American economic review},
  \textbf{93}~(3), 729--747.

\bibitem[{Ayg\"{u}n and Bo(2013)}]{aygun2013college}
\textsc{Ayg\"{u}n, O.} and \textsc{Bo, I.} (2013). \textit{College Admission
  with Multidimensional Reserves: The Brazilian Affirmative Action Case}. Tech.
  rep., Mimeo. 1.

\bibitem[{Aygun and B{\'o}(Forthcoming)}]{aygun2018college}
\textsc{Aygun, O.} and \textsc{B{\'o}, I.} (Forthcoming). College admission
  with multidimensional privileges: The brazilian affirmative action case.
  \textit{American economic journal : Microeconomics}.

\bibitem[{Bo(2016)}]{bo2016fair}
\textsc{Bo, I.} (2016). Fair implementation of diversity in school choice.
  \textit{Games and Economic Behavior}, \textbf{97}, 54--63.

\bibitem[{Bugarin and Meneguin(2016)}]{bugarin_incentivos_2016}
\textsc{Bugarin, M.} and \textsc{Meneguin, F.~B.} (2016). Incentivos a
  corrupcao e a inacao no servico publico: {Uma} analise de desenho de
  mecanismos. ({Incentives} to {Corruption} and {Inaction} in the {Civil}
  {Service}: {A} {Mechanism} {Design} {Approach}. {With} {English} summary.).
  \textit{Estudos Economicos}, \textbf{46}~(1), 43--89.

\bibitem[{Dur \textit{et~al.}(2020)Dur, Pathak and S{\"o}nmez}]{Dur2020-ph}
\textsc{Dur, U.}, \textsc{Pathak, P.~A.} and \textsc{S{\"o}nmez, T.} (2020).
  Explicit vs. statistical targeting in affirmative action: Theory and evidence
  from chicago's exam schools. \textit{Journal of Economic Theory},
  \textbf{187}, 104996.

\bibitem[{Echenique and Yenmez(2015)}]{echenique2015control}
\textsc{Echenique, F.} and \textsc{Yenmez, M.~B.} (2015). How to control
  controlled school choice. \textit{The American Economic Review},
  \textbf{105}~(8), 2679--2694.

\bibitem[{Ergin(2002)}]{ergin2002efficient}
\textsc{Ergin, H.~I.} (2002). Efficient resource allocation on the basis of
  priorities. \textit{Econometrica}, \textbf{70}~(6), 2489--2497.

\bibitem[{{European Union}(2015)}]{euPersonnel}
\textsc{{European Union}} (2015). General rules governing open competitions.
  \textit{Official Journal of the European Union}, \textbf{58}, C70 A/01.

\bibitem[{Hafalir \textit{et~al.}(2013)Hafalir, Yenmez and
  Yildirim}]{hafalir2013effective}
\textsc{Hafalir, I.~E.}, \textsc{Yenmez, M.~B.} and \textsc{Yildirim, M.~A.}
  (2013). Effective affirmative action in school choice. \textit{Theoretical
  Economics}, \textbf{8}~(2), 325--363.

\bibitem[{Klaus and Klijn(2013)}]{klaus_local_2013}
\textsc{Klaus, B.} and \textsc{Klijn, F.} (2013). Local and global consistency
  properties for student placement. \textit{Journal of Mathematical Economics},
  \textbf{49}~(3), 222--229.

\bibitem[{Kojima(2012)}]{Kojima2012-uv}
\textsc{Kojima, F.} (2012). School choice: Impossibilities for affirmative
  action. \textit{Games and Economic Behavior}, \textbf{75}~(2), 685--693.

\bibitem[{S{\"o}nmez \textit{et~al.}(2019)S{\"o}nmez, Yenmez and
  {Others}}]{Sonmez2019-hr}
\textsc{S{\"o}nmez, T.}, \textsc{Yenmez, M.~B.} and \textsc{{Others}} (2019).
  Affirmative action in india via vertical and horizontal reservations.
  \textit{Unpublished mimeo}.

\bibitem[{Sundell(2014)}]{sundell_are_2014}
\textsc{Sundell, A.} (2014). Are {Formal} {Civil} {Service} {Examinations} the
  {Most} {Meritocratic} {Way} to {Recruit} {Civil} {Servants}? {Not} in {All}
  {Countries}. \textit{Public Administration}, \textbf{92}~(2), 440--457.

\bibitem[{Tadenuma and Thomson(1991)}]{tadenuma1991no}
\textsc{Tadenuma, K.} and \textsc{Thomson, W.} (1991). No-envy and consistency
  in economies with indivisible goods. \textit{Econometrica: Journal of the
  Econometric Society}, pp. 1755--1767.

\bibitem[{Thomson(1990)}]{thomson1990consistency}
\textsc{Thomson, W.} (1990). The consistency principle. \textit{Game theory and
  applications}, \textbf{187}, 215.

\bibitem[{Thomson(1994)}]{thomson1994consistent}
\textsc{---} (1994). Consistent solutions to the problem of fair division when
  preferences are single-peaked. \textit{Journal of Economic Theory},
  \textbf{63}~(2), 219--245.

\end{thebibliography}

\section*{Appendix}

\subsection*{Formal description of the rules}

For the descriptions in this section, consider as given a set $W$ of workers, a set $M\subseteq W$ of minority workers, a set $A$ of workers previously hired, a sequence of hires $q^r=\langle q_1,q_2,\ldots,q_k\rangle$, and a score profile $\mathbf{s_W}$. 

\subsubsection*{\textbf{Sequential Priority (SP rule)}}
\begin{quotation}

\textbf{\textit{Round 1:}} Let $W_{1}=W$. 
The highest scoring $q_{1}$ workers in $W_{1}$ are selected. 
Let $A_{1}$, be the set of selected workers, where for each $w\in A_{1}$ and each $w'\in W_{1}\setminus A_{1}$
we have $s_{w}>s_{w'}$, and $|A_{1}|=q_{1}$.

\textbf{\textit{Round $\bm{k>1}$:}} Let $W_{k}=W_{k-1}\setminus A_{k-1}$.
The highest scoring $q_{k}$ workers in $W_{k}$ are selected. Let $A_{k}$ be the
set of selected workers, where for each $w\in A_{k}$ and each $w'\in W_{k}\setminus A_{k}$
we have $s_{w}>s_{w'}$, and $|A_{k}|=q_{k}$.

The assignment selected by $SP$ rule is 

$\varphi^{SP}(W,\langle q_{1},\ldots,q_{r}\rangle)=\displaystyle\bigcup_{a\leq r}A_{a}$.
\end{quotation}

\subsubsection*{\textbf{Sequential Adjusted Minority Reserves (SA rule)}}
\begin{quotation}

\textbf{\textit{Round 1:}}

\textbf{Step 1.1:} Let $W_{1,1}=W$, $M_{1,1}=M\cap W_{1,1}$ and $q_{1,1}=\lceil m\times q_{1} \rceil$.
The highest scoring $\min\{q_{1,1},|M_{1,1}|\}$ workers  in $M_{1,1}$ are
selected. 
Let $A_{1,1}$ be the set of selected workers,
where $A_{1,1}\subseteq M_{1,1}$.

\textbf{Step 1.2:} Let $W_{1,2}=W_{1,1}\setminus A_{1,1}$, $M_{1,2}=M\cap W_{1,2}$
and $q_{1,2}=q_{1}-|A_{1,1}|$. 
The highest scoring $q_{1,2}$ workers in $W_{1,2}$ are selected.
Let $A_{1,2}$ be the set of selected workers.

\textbf{\textit{Round $\bm{k>1}$:}}

\textbf{Step k.1:} Let $W_{k,1}=W_{k-1,2}\setminus A_{k-1,2}$, $M_{k,1}=M\cap(W_{k-1,2}\setminus A_{k-1,2})$
and $q_{k,1}=\lceil\min\{\max\{m-\frac{\omega(A_{1,2})+\ldots+\omega(A_{k-1,2})}{q_{k}},0\}\times q_{k},|M_{k,1}|\}\rceil$.
The highest scoring $q_{k,1}$ workers in $M_{k,1}$ are selected.
Let $A_{k,1}$ be the set of selected workers.

\textbf{Step k.2:} Let $W_{k,2}=W_{k,1}\setminus A_{k,1}$, $M_{k,2}=M\cap W_{k,2}$
and $q_{k,2}=q_{k}-|A_{k,1}|$. 
The highest scoring $q_{k,2}$ workers are selected from $W_{k,2}$.
Let $A_{k,2}$ be the set of selected workers.

The assignment selected by the $SA$ rule is 

$\varphi^{SA}(W,A,\langle q_{1},\ldots,q_{r}\rangle)={\displaystyle \bigcup_{\substack{a\leq r\\
i\in\{1,2\}
}
}A_{a,i}}$.
\end{quotation}

\subsubsection*{\textbf{Sequential use of minority reserves (SM rule)}}
\begin{quotation}

\textbf{\textit{Round 1:}}

\textbf{Step 1.1:} Let $W_{1,1}=W$, $M_{1,1}=M\cap W_{1,1}$ and $q_{1,1}=\lceil m\times q_{1}\rceil$.
The highest scoring $\min\{q_{1,1},|M_{1,1}|\}$ workers are
selected from $M_{1,1}$. Let $A_{1,1}$ be the set of selected workers.

\textbf{Step 1.2:} Let $W_{1,2}=W\setminus A_{1,1}$, $M_{1,2}=M\cap W_{1,2}$
and $q_{1,2}=q_{1}-|A_{1,1}|$. 
The highest scoring $q_{1,2}$ workers are selected from $W_{1,2}$. Let $A_{1,2}$
be the set of selected workers.

\textbf{\textit{Round $\bm{k>1}$:}}

\textbf{Step k.1:} Let $W_{k,1}=W_{k-1,2}\setminus A_{k-1,2}$, $M_{k,1}=M\cap(W_{k-1,2}\setminus A_{k-1,2})$
and $q_{k,1}=\lceil m\times q_{k}\rceil$. 
The highest scoring $\min\{q_{k,1},|M_{k,1}|\}$ workers from $M_{k,1}$ are selected. Let $A_{k,1}$ be the set of selected
workers.

\textbf{Step k.2:} Let $W_{k,2}=W\setminus A_{k,1}$, $M_{k,2}=M\cap W_{k,2}$
and $q_{k,2}=q_{k}-|A_{k,1}|$. 
The highest scoring $q_{k,2}$ workers from $W_{k,2}$ are selected. Let $A_{k,2}$
be the set of selected workers.

The assignment produced by the $SM$ rule is $\varphi^{SM}(W,q^{r})={\displaystyle \bigcup_{\substack{a\leq r\\
i\in\{1,2\}
}
}A_{a,i}}$.
\end{quotation}

\subsubsection*{\textbf{Brazilian assignment rule (B rule)}}
\begin{quotation}
The rule first identifies a
large number $k$ (which is larger than the total number of vacancies
to be filled but no larger than $|W|$). Then two groups are identified: $(i)$ $TM$, which is the set with the top $k\times m$
minority workers: $TM\subseteq M$ with $|TM|=\lceil k\times m\rceil$
such that for each $w\in TM$ and each $w'\in M\setminus TM$, we have $s_{w}>s_{w'}$
and $(ii)$ $O$, which is the set with the top $k(1-m)$ workers among those who were not chosen
in $(i)$, that is: $O\subseteq W\setminus TM$ such that $|O|=\lfloor k(1-m)\rfloor$
and for each $w\in O$ and $w'\in W\setminus(O\cup TM)$, we have $s_{w}>s_{w'}$.
Within each round $a\leq r$, we have two steps.

\textbf{\textit{Round 1:}}

\textbf{Step 1.1:} Let $O_{1,1}=O$, $TM_{1,1}=TM$ and $q_{1,1}=\lceil m\times q_{1}\rceil$.
The highest scoring $\min\{q_{1,1},|TM_{1,1}|\}$ minority workers
are selected from $TM_{1,1}$. Let$A_{1,1}$ be the set of selected
workers.

\textbf{Step 1.2:} Let $O_{1,2}=O_{1,1}$, $TM_{1,2}=TM\setminus A_{1,1}$
and $q_{1,2}=q_{1}-|A_{1,1}|$. The highest scoring $q_{1,2}$ workers
are selected from $O_{1,2}$. Let $A_{1,2}$ be the set of selected
workers.

\textbf{\textit{Round $\bm{k>1}$:}}

\textbf{Step k.1:} Let $O_{k,1}=O_{k-1,2}\setminus A_{k-1,2}$, $TM_{k,1}=TM_{k-1,2}$
and $q_{k,1}=\lceil m\times q_{k}\rceil$. The highest scoring $\min\{q_{k,1},|TM_{k,1}|\}$
minority workers are selected from $TM_{k,1}$. Let $A_{k,1}$ be the
set of selected workers.

\textbf{Step k.2:} Let $O_{k,2}=O_{k,1}$, $TM_{k,2}=TM\setminus A_{k,1}$
and $q_{k,2}=q_{k}-|A_{k,1}|$. The highest scoring $q_{k,2}$ workers
are selected from $O_{k,2}$. Let$A_{k,2}$ be the set of selected
workers.

The assignment produced by the $B$ rule is $\varphi^{B}(W,q^{r})={\displaystyle \bigcup_{\substack{a\leq r\\
i\in\{1,2\}
}
}A_{a,i}}$.
\end{quotation}

\subsubsection*{\textbf{French assignment rule (F rule)}}
\begin{quotation}
Let $m$ be the target ratio of people with disabilities, $\mathbf{s_W^O}$ be a scoring profile for workers in the open competition and $\mathbf{s_W^D}$ be a scoring profile for workers in the competition for workers with disabilities.

\textbf{\textit{Round 1:}}

\textbf{Policy 1:} Let $W_{1,1}=W$, $M_{1,1}=M\cap W_{1,1}$. The highest scoring $\min\{q_{1,1},|W_{1,1}|\}$
workers, with respect to $s_{W}^{O}$, are selected from $W_{1,1}$.
Let $A_{1}$ be the set of selected workers.

\textbf{Policy 2:} Let $W_{1,1}=W$, $M_{1,1}=M$. The highest scoring $\min\{\lfloor(1-m)\times q_{1,1}\rfloor,|M_{1,1}|\}$
workers, with respect to $s_{W}^{D}$, are selected from $M_{1,1}$.
Let $A_{1,1}$ be the set of workers selected in this step. Then, the
highest scoring $\min\{\lceil m\times q_{1,1}\rceil,|W_{1,1}\setminus A_{1,1}|\}$
workers, with respect to $s_{W}^{O}$, are selected from $W_{1,1}\setminus A_{1,1}$. 
Let $A_{1,2}$ be the set of selected workers in this step, and let $A_{1}=A_{1,1}\cup A_{1,2}$.

\textbf{\textit{Round $\bm{k>1}$:}}

\textbf{Step k.1:} Let $W_{k,1}=W_{k-1,2}\setminus A_{k-1,2}$, $TA_{k,1}=\bigcup_{i=1}^{k-1}A_{i}$.
Let $q_{k,1}=min\left\{ max\left\{ m\times\left(\sum_{i=1}^{k}q_{i}\right)-\omega\left(TA_{k,1}\right),0\right\} ,|M_{k,1}|\right\} $.
The highest scoring $q_{k,1}$ workers, with respect to $s_{W}^{D}$,
are selected from $M_{k,1}$. Let $A_{k,1}$ be the set of workers selected
in this step.

\textbf{Step k.2:} Let $W_{k,2}=W_{k,1}\setminus A_{k,1}$, and $q_{k,2}=q_{k}-|A_{k,1}|$.
The highest scoring $q_{k,2}$ workers, with respect to $s_{W}^{O}$,
are selected from $W_{k,2}$. Let $A_{k,2}$ be the set of selected
workers, and $A_{k}=A_{k,1}\cup A_{k,2}$.

The assignment produced by the $F$ rule is $\varphi^{F}(W,q^{r})={\displaystyle \bigcup_{a\leq r}A_{a}}$.
\end{quotation}

\subsection*{Proofs}

\subsubsection*{Proof of Proposition \ref{prop:SeqMinorityReservesRespectsMinorityRightsNotConsistentNotFair}}

Example \ref{example:sequentialMinorityReserves} shows that the sequential use of minority reserves is neither aggregation independent nor fair. To see that it respects minority rights, notice that every time $q$ workers are hired, \textbf{at least} $m\times q$ minority workers are among them. As a result, a proportion of at least $m$ of the workers hired, at any point, is in $M$ and therefore the rule respects minority rights.  $\square$

\subsubsection*{Proof of Theorem \ref{thm:MSPMREquivMinorityRIghtsFairConsistent}}
First, we show that the SA rule respects minority rights and is minority fair. 

By definition, the SA rule \emph{respects minority rights}, at the
step $k.1$ of each round~$k$, selects minority workers to satisfy
the minimum requirement up to that round. Note that when there are
not enough minority workers, SA selects all the available minority workers.

Now, we show that the rule is minority fair.

Let $A\equiv \varphi^{SA} (W, \langle q_{1}, \ldots, q_{r}\rangle)$ be the selection made for the
problem. We want to show that $(i)$ for each $w,w'\in W\setminus M$,
if $w\in A$ and $w'\notin A$, then $s_{w}>s_{w'}$, $(ii)$ for each $w,w'\in M$,
if $w\in A$ and $w'\notin A$, then $s_{w}>s_{w'}$. $(iii)$ for each $w\in W\setminus M$
and $w'\in M$, if $s_{w}<s_{w'}$ and $w\in A$, then $w'\in A$, $(iv)$
if there is $w\in W\setminus M$ and $w'\in M$ with $s_{w}>s_{w'}$, $w\notin A$
and $w'\in A$, then $\omega(A)/|A|\leq m$.

First note that cases $(i)$, $(ii)$ and $(iii)$ hold trivially as
at step~$k.1$ of each round~$k$, the rule selects the highest scoring
workers in $M$, and in step~$k.2$ it selects the highest scoring workers.

Suppose, for contradiction, that there is $w\in W\setminus M$ and $w'\in M$
with $s_{w}>s_{w'}$, $w\notin A$ and $w'\in A$, but $\omega(A)/|A|>m$.
Note that $w'$ cannot be selected at step $k.2$ of any round~$k$,
as $w$ would have been selected as well. The only case in which the candidate $w'$
is selected is during step $\ell.1$ of some round~$\ell$. Since $s_{w}>s_{w'}$, $w\notin A$
and $w'\in A$, then $|top_{q}(W)\cap M|<m\times q$, where $q=\sum_{a\leq r}q_{a}$.\footnote{That is, the only way to hire a minority worker with a lower score
and not the non-minority with a higher score, is to satisfy the minority
requirements. As we mentioned earlier, the worker $w'$ is hired during
step $\ell.1$ of some round~$\ell$, where selection occurs among
minorities only.} Thus, at step $r.1$ of the last round~$r$, a selection is made
so that $|({\displaystyle \bigcup_{\substack{a<r\\
i\in\{1,2\}
}
}A_{a}^{i})\cup A_{k}^{1}|=m\times q}$. Since $|top_{q}(W)\cap M|<m\times q$, we have $A_{r}^{2}\cap M=\emptyset$.
Thus, we obtain $|A\cap M|=m\times q$ which contradicts our assumption.

For the next part, we need to introduce some concepts. First, we will say that given a set of workers $W$, minority workers $M\subseteq W$, score profile $\mathbf{s_W}$, $q\geq 0$ and $m\geq 0$, a set $W^*\subseteq W$ \textit{respects minority rights} if $W^*$ satisfies the same conditions that $\varphi\left(W,\langle q \rangle \right)$ must satisfy when $\varphi$ respects static minority rights (definition \ref{def:RespectsStaticMinorityRights}). Similarly, a set $W^*\subseteq W$ is \textit{minority fair} if $W^*$ satisfies the same conditions that $\varphi\left(W,\langle q \rangle \right)$ must satisfy when $\varphi$ is static minority fair (definition \ref{def:StaticMinorityFair}). 

To show that if a rule is minority fair and respects minority rights it is the $SA$ rule, we show that, for any given number of workers to be hired $q$, there is only one set of workers of that size that is minority fair and respects minority rights.

\begin{lemma}
\label{lem:UniqueMinoritySet}
For any given set of workers $W$, minority workers $M\subseteq W$, score profile $\mathbf{s_W}$, $q\geq 0$ and $m\geq 0$, there exists only one set $W^*\subseteq W$ that respects minority rights, is minority fair and such that $\left|W^*\right|=q$.
\end{lemma}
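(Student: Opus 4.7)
The plan is to establish existence by an explicit candidate and uniqueness by a direct comparison of any two candidate sets. For existence, I would invoke the portion of the proof of Theorem~\ref{thm:MSPMREquivMinorityRIghtsFairConsistent} already written above, which shows that $\varphi^{SA}(W,\langle q\rangle)$ respects minority rights and is minority fair; since these properties specialize to their static counterparts when there is only one round, $\varphi^{SA}(W,\langle q\rangle)$ is the desired witness.

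For uniqueness, let $W_1^*, W_2^* \subseteq W$ be two sets of size $q$ satisfying both properties and write $k_j = |W_j^* \cap M|$. The key reduction is that clause $(i)$ of (static) minority fairness forces $W_j^* \cap M$ to be exactly the top $k_j$ workers in $M$ and $W_j^* \setminus M$ to be exactly the top $q-k_j$ workers in $W \setminus M$. Hence once $k_1 = k_2$ is established, equality of the two sets follows.

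If $|M| < mq$, respecting static minority rights forces $M \subseteq W_j^*$, so $k_1 = k_2 = |M|$. Otherwise $|M| \geq mq$, giving $k_1, k_2 \geq mq$, and I suppose for contradiction that $k_1 < k_2$. Then $k_2 > mq$, so $\omega(W_2^*)/|W_2^*| > m$ and the conclusion of clause $(iii)$ fails for $W_2^*$; because $W_2^*$ is minority fair, its hypothesis must fail too, giving the ``no inversion'' statement $s_w < s_{w'}$ for every non-minority $w \notin W_2^*$ and every minority $w' \in W_2^*$. I then take $w$ to be the $(q-k_2+1)$-th top-scoring non-minority (which lies outside $W_2^*$ by $(i)$) and $w'$ the $(k_1+1)$-th top-scoring minority (which lies inside $W_2^*$ since $k_1+1 \leq k_2$), so that $s_w < s_{w'}$. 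But $q-k_2+1 \leq q-k_1$ puts $w$ inside $W_1^*$, while the definition of $w'$ puts it outside $W_1^*$. This is exactly the antecedent of clause $(ii)$ of minority fairness for $W_1^*$, whose conclusion $w' \in W_1^*$ produces the contradiction.

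The step I expect to be the main obstacle is choosing the correct pair $(w,w')$: the trick is to let one threshold come from $W_2^*$ (rank $q-k_2+1$ among non-minorities) and the other from $W_1^*$ (rank $k_1+1$ among minorities), so that the ``no-inversion'' property delivered by clause $(iii)$ applied to $W_2^*$ feeds exactly into the antecedent of clause $(ii)$ applied to $W_1^*$. Careful handling of the distinction between strict and weak inequalities around $k_j \geq mq$ is also needed, but the strict inequality $k_2 > mq$ implied by $k_1 < k_2$ makes the argument go through in both the integer and non-integer cases.
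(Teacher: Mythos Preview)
Your proposal is correct and follows essentially the same approach as the paper: reduce to showing $\omega(W_1^*)=\omega(W_2^*)$ via clause~(i), dispose of the case $|M|<mq$ by respecting minority rights, and in the remaining case derive a contradiction from clauses~(ii) and~(iii). The only minor difference is that the paper picks \emph{arbitrary} witnesses $w_1^*\in (W_2^*\setminus W_1^*)\setminus M$ and $w_2^*\in (W_1^*\setminus W_2^*)\cap M$ and then case-splits on whether $s_{w_1^*}>s_{w_2^*}$ or $s_{w_2^*}>s_{w_1^*}$, applying clause~(iii) to $W_1^*$ in the first case and clause~(ii) to $W_2^*$ in the second; you instead first invoke clause~(iii) on $W_2^*$ to get the ``no inversion'' inequality and then choose rank-specific witnesses so that clause~(ii) applies directly to $W_1^*$, avoiding the case split.
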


\begin{proof}
First, note that property (i) of minority fairness implies that if a set $W^*$ is minority fair, it contains the set $top_{\omega(W^*)}(M)$, that is, the top $\omega(W^*)$ highest scoring workers in $M$, and the set $top_{q-\omega(W^*)}(W\backslash M)$, the $q-\omega(W^*)$ highest scoring workers in $W\backslash M$, both with respect to $\mathbf{s_W}$.

Suppose, for contradiction, that there are $W^1\subseteq W$ and $W^2\subseteq W$, where both $W^1$ and $W^2$ respect minority rights and are minority fair, $\left|W^1\right|=\left|W^2\right|=q$, and $W^1\neq W^2$. 

Note first that if $|M|<m\times q$, respecting minority rights implies that $M\subset W^1$ and $M\subset W^2$. Minority fairness implies, moreover, that $top_{q-|M|}(W\backslash M)\subseteq W^1$ and $top_{q-|M|}(W\backslash M)\subseteq W^2$. But then $W^1=W^2$, a contradiction. It must be, therefore, that $|M|\geq m\times q$.

Next, note that minority fairness implies that $\omega(W^1)\neq \omega(W^2)$. To see that, notice that if $\omega(W^1)=\omega(W^2)=m^*$, $top_{m^*}(M)\subseteq W^1$, $top_{q-m^*}(W\backslash M)\subseteq W^1$, $top_{m^*}(M)\subseteq W^2$, and $top_{q-m^*}(W\backslash M)\subseteq W^2$. But this would imply that $W^1=W^2$, a contradiction.

Suppose now, without loss of generality, that $\omega(W^1)>\omega(W^2)$. Since $W^2$ respects minority rights, $\omega(W^2)\geq m\times q$, and therefore $\omega(W^1)>\omega(W^2)\geq m\times q$. Therefore, there is a worker $w^*_1\in W\backslash M$ such that $w^*_1\in W^2$ and $w^*_1\not\in W^1$, and a worker $w^*_2\in M$ such that $w^*_2\in W^1$ and $w^*_2\not\in W^2$.

We have two cases to consider. First, suppose that $s_{w^*_1}>s_{w^*_2}$. This would violate $W^1$ being minority fair, since $m(W^1)> m\times q$ and $w^*_1\not\in W^1$. Then it must be that $s_{w^*_2}>s_{w^*_1}$. But then, since $W^2$ is minority fair, condition (ii) implies that $w^*_2\in W^2$, a contradiction.

We conclude, therefore, that $W^1\neq W^2$ is false, proving uniqueness.
\end{proof}

Since the SA rule respects minority rights and is minority fair, lemma \ref{lem:UniqueMinoritySet} implies that this is the only such rule.

$\square$

\subsubsection*{Proof of Theorem \ref{thm:SAMAggIndep}}

Let $\varphi$ be a rule that is static minority fair, satisfies static minority rights, and is aggregation independent. Let $\lambda^*$ be any sequence of hires.

We will follow by induction on the rounds in $\lambda^*$. First, the base $\langle q_1\rangle$: from lemma \ref{lem:UniqueMinoritySet}, there is a unique set $W^1\subseteq W$ that is minority fair and respects minority rights. Both $\varphi$ and $\varphi^{SA}$ are static minority fair and respect static minority rights. Therefore, $\varphi(W,\langle q_1 \rangle)=\varphi^{SA}(W,\langle q_1 \rangle)=W^1$.

For the induction step, assume that $\varphi(W,\langle q_1,q_2,\ldots,q_{\ell} \rangle)=\varphi^{SA}(W,\langle q_1,q_2,\ldots,q_{\ell} \rangle)$. Since $\varphi$ is aggregation independent, the following is true:

\[\varphi\left(W,\langle q_1,q_2,\ldots,q_{\ell} \rangle\right) = \varphi\left(W,\langle q \rangle\right)\]

where $q=\sum_{i=1}^{\ell}q_i$. Let $H=\varphi\left(W,\langle q \rangle\right)$. Aggregation independence of $\varphi$ implies, moreover, that:

\[\varphi\left(W,\langle q \rangle\right) \cup \varphi\left(W,H,\langle q_{\ell+1} \rangle\right) = \varphi\left(W,\langle q+q_{\ell+1} \rangle\right)\text{    }(*)\]

Since both $\varphi$ and $\varphi^{SA}$ are static minority fair and respect static minority rights, our claim above implies that $\varphi\left(W,\langle q \rangle\right)=\varphi^{SA}\left(W,\langle q \rangle\right)$ and $\varphi\left(W,\langle q+q_{\ell+1} \rangle\right)=\varphi^{SA}\left(W,\langle q+q_{\ell+1} \rangle\right)$. Since workers cannot be hired more than once, $\varphi\left(W,\langle q \rangle\right) \cap \varphi\left(W,H,\langle q_{\ell+1} \rangle\right)=\emptyset$. 

Therefore, there is a unique value of $\varphi\left(W,H,\langle q_{\ell+1} \rangle\right)$ that satisfies the equality $(*)$ above, implying that $\varphi\left(W,H,\langle q_{\ell+1} \rangle\right)=\varphi^{SA}\left(W,H,\langle q_{\ell+1} \rangle\right)$, and therefore that:

\[\varphi\left(W,\langle q_1,q_2,\ldots,q_{\ell},q_{\ell+1} \rangle\right)=\varphi^{SA}\left(W,\langle q_1,q_2,\ldots,q_{\ell},q_{\ell+1} \rangle\right)\]

finishing our proof. $\square$

\subsubsection*{Proof of Proposition \ref{prop:FrenchAssignment}}

Let $W^{*}=\{w_{1},w_{2},w_{3},w_{4},w_{5}\}$
with scores $s_{W}=(50,40,30,20,10)$. For simplicity, we will use $m=0.5$.

Consider first the case $M^{*}=\left\{ w_{3},w_{4}\right\} $. If $q=2$, $\varphi^{F_{1}}(\left\{ W^{*},M^{*}\right\} ,q)=\left\{ w_{1},w_{2}\right\} $,
which fails to satisfy minority rights.

Consider now the case $M^{*}=\left\{ w_{4},w_{5}\right\} $. Consider
two possibilities: $q_{1}=q_{2}=2$ and $q=4$. Then $\varphi^{F_{1}}\left(\left\{ W^{*},M^{*}\right\} ,\langle q_{1},q_{2}\rangle \right)=\left\{ w_{1},w_{2},w_{4},w_{5}\right\} $
but $\varphi^{F_{1}}\left(\left\{ W^{*},M^{*}\right\} ,q\right)=\left\{ w_{1},w_{2},w_{3},w_{4}\right\} $,
a violation of \emph{aggregation independence.}

It is easy to see that the rule that results from policy 2, under
the given assumptions, is equivalent to the sequential adjusted minority reserves rule. Therefore,
Policy 2 of the French assignment rule respects minority rights, is
aggregation independent, and is minority fair. $\square$

\subsubsection*{Proof of Proposition \ref{prop:BrazilianRuleNotMinorityFair}} We will show that the Brazilian rule respects minority rights and is aggregation independent, but fails to be minority fair. 

By assumption, no more than $k$ workers may be hired in total. Therefore, for any $q$ workers to be hired in any given round there should be at least $\lceil q\times m \rceil$ minority workers in $TM$ and $q-\lceil q\times m \rceil$ workers in $O$. As a result, the Brazilian rule acts as two parallel sequential priority rules: one in $TM$ and one in $O$. Therefore, the combination of both is evidently aggregation independent. Next, notice that again because of the assumption on the value of $k$, $\left|M\right|\geq m\times\sum_{i=1}^{t}q_{i}$. Moreover, since for any $q\in \left\{ q_1,\ldots,q_t \right\}$ there are  at least $\lceil q\times m \rceil$ minority workers in $TM$, $\omega\left(\varphi\left(W,\langle q_{1},\ldots,q_{t}\rangle \right)\right)\geq m\times \sum q_i$ and by assumption on $k$, $\left|\varphi\left(W,\langle q_{1},\ldots,q_{t}\rangle \right)\right|=\sum q_i$ therefore $\omega\left(\varphi\left(W,\langle q_{1},\ldots,q_{t}\rangle \right)\right)/\left|\varphi\left(W,\langle q_{1},\ldots,q_{t}\rangle \right)\right|\geq m$, implying that the Brazilian rule respects minority rights.
Finally, example \ref{example:BrazilNotMinorityFair} shows that the rule is not minority fair.  $\square$

\subsubsection*{Proof of Proposition \ref{prop:NSWRuleProperties}}

Example \ref{example:NSWNotfairNotMinorityFair} shows that the NSW rule is neither fair nor minority fair. Moreover, since in our results we assume that the number of men and women are always large enough, the NSW consists of two parallel sequential priority hirings (one for males, the other for female workers), and therefore satisfies aggregation independence. Finally, it respects minority rights, since the number of male and female workers hired is always the same. $\square$

\subsubsection*{Proof of Theorem \ref{thm:multipleInstitutionsSinglePriority}}
The single priority rule satisfying common top, aggregation independence and permutation independence is straightforward to see. 

Denote by \textit{sequence of single hirings} a plural sequence of hires of the form $\Lambda=\langle(i^1,1),(i^2,1),(i^3,1),\ldots\rangle$. That is, every hire made by any institution in any round consists of only one worker.

\begin{claim}
Let $W$ be a set of workers, and $\Phi$ be a rule that satisfies common top and permutation independence. There exists a ranking $\succ^*$ over $W$ such that for any sequence of single hirings $\Lambda$, $\Phi(W,\Lambda)=\Phi^{\succ^*}(W,\Lambda)$, where $\Phi^{\succ^*}$ is the single priority rule that uses $\succ^*$.
\end{claim}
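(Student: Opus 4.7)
The plan is to construct the ranking $\succ^*$ by induction, with $w^*_j$ playing the role of the ``$j$-th residual common top'' of the rule. Set $w^*_1 := w^*$ (the common top given by hypothesis). Having identified $w^*_1 \succ^* \cdots \succ^* w^*_k$ and verified that every length-$k$ single-hire sequence produces $S_k = \{w^*_1, \ldots, w^*_k\}$ with the $j$-th hire being $w^*_j$, I will identify $w^*_{k+1}$ as the unique worker that every institution would hire next given that $S_k$ has been hired. The claim then follows immediately: in any single-hiring $\Lambda$ the $j$-th hire is $w^*_j$, which matches $\Phi^{\succ^*}$ by construction.

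The inductive step splits into two parts. Part (a): every length-$(k+1)$ sequence produces $S_{k+1}$ in the expected order; this is immediate from the inductive hypothesis applied to the first $k$ hires (yielding $S_k$ in order) and to the $(k+1)$-th hire (yielding $w^*_{k+1}$). Part (b): after $S_{k+1}$ is hired, every institution's next hire is a common worker $w^*_{k+2}$, independently of both the hiring institution and its own history within $S_{k+1}$.

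For part (b), I first establish institution-independence among empty-history hires. For any two institutions $i \neq j$, use $|I| \geq 3$ to pick a third $\ell \neq i, j$ and compare $\Lambda_i = \langle (\ell, 1)^{k+1}, (i, 1) \rangle$ with its permutation $\Lambda_i' = \langle (i, 1), (\ell, 1)^{k+1} \rangle$. Permutation independence forces equal totals; a round-by-round analysis of $\Lambda_i'$ using part (a) at level $k+1$ shows that $\ell$ builds up the history $\{w^*_2, \ldots, w^*_{k+1}\}$ and then performs a final hire, so equating totals yields that $i$'s hire (with empty history, union $S_{k+1}$) coincides with that final $\ell$-hire. The analogous construction with $j$ in place of $i$ produces the \emph{same} $\ell$-hire, because $\ell$'s behavior depends only on its own history and the running union, which are identical in the two variants. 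Hence $i$ and $j$ hire the same worker, which we name $w^*_{k+2}$.

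The main obstacle is extending this to institutions with non-empty history. For a reachable matching $\mu$ with $\mu(i) = T \subsetneq S_{k+1}$, I construct a sequence realizing $\mu(i) = T$ using part (a) at level $k+1$ (which forces the workers to be hired in the fixed order $w^*_1, \ldots, w^*_{k+1}$) by placing $(i, 1)$'s at the rounds indexed by $T$. Appending $(i, 1)$ and comparing different permutations of the resulting length-$(k+2)$ sequence --- in particular, aggregating $i$'s hires at the front of the sequence while varying where the auxiliary $\ell$-hires sit --- permutation independence chains together equalities that show $\ell$'s hire is the same across the various reachable histories with the same union $S_{k+1}$. Combined with the first sub-step, this gives that every institution's next hire after $S_{k+1}$ is the single worker $w^*_{k+2}$, completing the induction.
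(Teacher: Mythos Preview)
Your overall strategy---constructing $\succ^*$ one worker at a time by showing a ``residual common top'' exists after each initial segment has been hired---is a legitimate route, and is organized differently from the paper's proof, which instead shows directly by induction on $|\Lambda|$ that any two multi-institution single-hiring sequences of the same length produce the same union (the ranking then falls out).

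There is, however, a genuine gap in your inductive step. In the empty-history sub-step of part~(b) you take $\Lambda_i=\langle(\ell,1)^{k+1},(i,1)\rangle$ and assert that $i$'s final hire there has ``empty history, union $S_{k+1}$''. Knowing the union before $i$'s hire is $S_{k+1}$ requires knowing that $\ell$'s first $k+1$ solo hires are $w^*_1,\ldots,w^*_{k+1}$. But at round $j+1$ of that prefix, $\ell$'s own history is $S_j$ and the running union is $S_j$, so you would need part~(b) at level $j$ with $T=S_j$---the \emph{full} set. Your part~(b), as stated, only covers $T\subsetneq S_{k+1}$, and you never establish the full-history case; hence the induction does not close. (Your round-by-round analysis of $\Lambda_i'$ is fine, since there $\ell$'s history always misses $w^*_1$; the problem is on the $\Lambda_i$ side.)

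The repair is to avoid single-institution prefixes altogether. For instance, replace $\Lambda_i$ by $\langle(j,1),(\ell,1)^{k},(i,1)\rangle$ with $j\neq i,\ell$: the first $k+1$ rounds now involve two institutions, so your part~(a) at level $k+1$ applies and gives union $S_{k+1}$ with $\mu(i)=\emptyset$; swapping $i$ and $j$ gives the comparison you want. The same caution is needed in your second sub-step for non-empty $T$: the phrase ``aggregating $i$'s hires at the front'' again produces single-institution prefixes, and the case $|T|=k$ (where $S_{k+1}\setminus T$ is a singleton) needs particular care, since the obvious permutations there also collapse to a single institution in the first $k+1$ rounds. That part of your sketch is too vague to verify as written.
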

\begin{proof}
We will prove by induction on the number of hires in a plural sequence of hires. That is, we will show that there exists a ranking $\succ^*$, independent of the sequence of hires, that is followed by $\Phi$ as a single priority.

In the remaining steps of the proof, the set $W$ and the rule $\Phi$ are given, and so for any plural sequence of hires $\Lambda$, we will use the notation $\left\{\Lambda\right\}$ to represent the set $\bigcup_{i\in I}\Phi_i\left(W,\Lambda\right)$. That is, $\left\{\Lambda\right\}$ is the set of workers in $W$ hired by some institution under $\Phi$ after the sequence of hires $\Lambda$. Since we will only look at single hirings, we will represent plural sequences of hires as sequences of institutions and use $\langle i_1,i_2,\ldots\rangle$ to represent $\langle (i_1,1),(i_2,1),\ldots\rangle$. 

We will use \textbf{(PI)} to indicate that we used the property of \textit{permutation independence} of $\Phi$,  \textbf{(AI)} to indicate that we used \textit{aggregation independence}, and \textbf{(CT)} for \textit{common top}.

Moreover, we will use \textbf{(P*)} to indicate that we are using the following fact:

\begin{quotation}
If $\Lambda_1$, $\Lambda_2$, and $i\in I$ are such that $\left\{\Lambda_1\right\}=\left\{\Lambda_2\right\}$ and $\Phi_i\left(W,\Lambda_1\right)=\Phi_i\left(W,\Lambda_2\right)$, then $\Phi_i\left(W,\langle\Lambda_1,i\rangle\right)=\Phi_i\left(W,\langle\Lambda_2,i\rangle\right)$. That is, if $\Lambda_1$ and $\Lambda_2$ are such that institution $i$ hires the same set of workers, and the set of workers remaining after all of the hires in both plural sequences of hires is the same, then $i$ would hire the same worker after both $\Lambda_1$ and $\Lambda_2$. This comes directly from the definition of a hiring rule $\Phi_i$.
\end{quotation}

\textbf{Induction Base}
The induction base is the case where the smallest number of hires is made while still having at least two institutions hiring. Therefore $|\Lambda|=2$. Suppose that the claim is not true. That is, there might be plural sequences of hires with two hires that cannot be explained by a ranking $\succ^*$ over $W$. That implies that there are $\Lambda_1\neq\Lambda_2$, where $\Lambda_1=\langle i_1,i_2\rangle$, $\Lambda_2=\langle i_3,i_4\rangle$, and $\left\{\Lambda_1\right\}\neq\left\{\Lambda_2\right\}$. 

Since the sequences of hires involve at least two institutions, $i_1\neq i_2$ and $i_3\neq i_4$. Since $\Lambda_1\neq\Lambda_2$, there are two cases to consider: (i) $i_1\neq i_3$, and (ii) $i_2\neq i_4$. Consider (i). By \textbf{(PI)}, $\left\{\langle i_1,i_2\rangle\right\}=\left\{\langle i_2,i_1\rangle\right\}$. By \textbf{(P*)}, \textbf{(CT)} and the fact that $i_1\neq i_3$, $\left\{\langle i_2,i_1\rangle\right\}=\left\{\langle i_2,i_3\rangle\right\}$. By \textbf{(PI)}, $\left\{\langle i_2,i_3\rangle\right\}=\left\{\langle i_3,i_2\rangle\right\}$. By \textbf{(P*)}, \textbf{(CT)} and the fact that $i_3\neq i_4$, $\left\{\langle i_3,i_2\rangle\right\}=\left\{\langle i_3,i_4\rangle\right\}$. But then $\left\{\langle i_1,i_2\rangle\right\}=\left\{\langle i_3,i_4\rangle\right\}$, a contradiction. For case (ii), \textbf{(PI)} implies that $\left\{\langle i_1,i_2\rangle\right\}=\left\{\langle i_2,i_1\rangle\right\}$ and $\left\{\langle i_3,i_4\rangle\right\}=\left\{\langle i_4,i_3\rangle\right\}$, which makes this case equivalent to (i).

\textbf{Induction Step}

We now assume that for every sequence of single hirings $\Lambda$ such that $|\Lambda|\leq k$, the rule $\Phi$ hires according to the ranking $\succ^*$. We will use \textbf{(IA)} to indicate that we are using this \textit{induction assumption}. 

Suppose now that the claim is not true. That is, there are sequences of single hirings $\Lambda_1$, $\Lambda_2$, such that $|\Lambda_1|=|\Lambda_2|=k$, and institutions $i_1,i_2\in I$, for which $\left\{\langle \Lambda_1,i_1\rangle\right\}\neq\left\{\langle \Lambda_2,i_2\rangle\right\}$. There are two cases to consider.

\textbf{Case (i): $i_1\neq i_2$.}

Let $\Lambda_1^a$ be a sequence of single hires, such that:
    \begin{itemize}
        \item $|\Lambda_1^a|=k$
        \item The rounds in which $i_1$ hires in $\Lambda_1^a$ are exactly the same in which $i_1$ hires in $\Lambda_1$, if any.
        \item For the rounds in which $i_1$ does not hire:
        \begin{itemize}
            \item Let $i_3$ be the institution hiring at the first round in which $i_1$ doesn't hire  (note that this must exist, since $\Lambda_1$ has hirings from at least two institutions).
            \item Let $i_2$ be the institution hiring at every other round of $\Lambda_1^a$, if any.
        \end{itemize}
    \end{itemize}

In $\Lambda_1^a$, therefore, hires made by $i_1$, if any, are the same as in $\Lambda_1$, there is exactly one hire by $i_2$, and the remaining hires, if any, are made by $i_3$.

By \textbf{(IA)} and \textbf{(P*)}, $\left\{\langle \Lambda_1,i_1\rangle\right\}=\left\{\langle \Lambda_1^a,i_1\rangle\right\}$. Next, let $\Lambda_1^b$ be exactly as $\Lambda_1^a$, except that the single place where $i_3$ is is replaced by $i_1$. By \textbf{(PI)}, $\left\{\langle \Lambda_1^a,i_1\rangle\right\}=\left\{\langle \Lambda_1^b,i_3\rangle\right\}$. Notice that $\Lambda_1^b$ contains all the hires made by $i_1$ in $\Lambda_1$, in addition to one extra hire from $i_1$. All other hires in $\Lambda_1^b$ are made by $i_2$. That is, there is no hire from $i_3$ in $\Lambda_1^b$.

Next, let $\Lambda_1^c$ a sequence of hires where the hires in $\Lambda_1^c$ are exactly the same as $\Lambda_2$, except that:

\begin{itemize}
    \item Every round in which $i_3$ hires in $\Lambda_2$, the hire is made by $i_1$ instead,
    \item Denote by $t^*$ the first round in which $i_3$ does not hire in $\Lambda_2$. Note that this must exist, since $\Lambda_2$ has hirings from at least two institutions. Let $i_2$ hire in round $t^*$ in $\Lambda_1^c$ instead.
\end{itemize}

 Notice, therefore, that there is no hire from $i_3$ in $\Lambda_1^c$. By \textbf{(IA)} and \textbf{(P*)}, therefore, $\left\{\langle \Lambda_1^b,i_3\rangle\right\}=\left\{\langle \Lambda_1^c,i_3\rangle\right\}$.

Next, let $\Lambda_1^d$ be exactly as $\Lambda_1^c$, except that the $i_2$ in round $t^*$ is replaced by $i_3$. By \textbf{(PI)}, $\left\{\langle \Lambda_1^c,i_3\rangle\right\}=\left\{\langle \Lambda_1^d,i_2\rangle\right\}$. Notice that the rounds in which $i_2$ hires in $\Lambda_1^d$ are exactly the same as in $\Lambda_2$, and as a result, \textbf{(P*)} and \textbf{(IA)} imply that the last hire made by $i_2$ in $\langle \Lambda_1^d,i_2\rangle$ is the same as in $\langle \Lambda_2,i_2\rangle$. Not only that, \textbf{(IA)} implies that the set of workers hired in the first $k$ hires are the same, and therefore  $\left\{\langle \Lambda_1^d,i_2\rangle\right\}=\left\{\langle \Lambda_2,i_2\rangle\right\}$, implying that $\left\{\langle \Lambda_1,i_1\rangle\right\}=\left\{\langle \Lambda_2,i_2\rangle\right\}$, a contradiction.

\textbf{Case (ii): $i_1=i_2$.}

We will use three institutions in the following steps: $i,i_a,i_b$, where $i=i_1=i_2$, and $i\neq i_a\neq i_b$. Let $\Lambda_1^a$ be a sequence of single hires, such that $|\Lambda_1^a|=k$, the rounds in which $i$ hires in $\Lambda_1$ are exactly the same in which $i$ hires in $\Lambda_1^a$, if any. Moreover, let $i_b$ be the institution hiring at the first round in which $i$ doesn't hire in $\Lambda_1$ (note that this must exist, since $\Lambda_1$ has hirings from at least two institutions), and $i_a$ be the institution hiring in every other rounds, if any. By \textbf{(IA)} and \textbf{(P*)}, $\left\{\langle \Lambda_1,i\rangle\right\}=\left\{\langle \Lambda_1^a,i\rangle\right\}$.

Next, let $\Lambda_1^b$ be exactly as $\Lambda_1^a$, replacing the single place where $i_b$ is by $i$. By \textbf{(PI)}, $\left\{\langle \Lambda_1^a,i\rangle\right\}=\left\{\langle \Lambda_1^b,i_b\rangle\right\}$. 

Next, let $\Lambda_1^c$ be exactly as $\Lambda_2$, except that every hire made by $i_b$, if any, is made instead by $i_2$. Moreover, let $i$ be the institution hiring at the first round in which $i_b$ doesn't hire in $\Lambda_2$ (note that this must exist, since $\Lambda_2$ has hirings from at least two institutions). Denote this round by $t^*$. Notice, therefore, that there is no hire from $i_b$ in $\Lambda_1^c$. By \textbf{(IA)} and \textbf{(P*)}, therefore, $\left\{\langle \Lambda_1^b,i_b\rangle\right\}=\left\{\langle \Lambda_1^c,i_b\rangle\right\}$.

Next, let $\Lambda_1^d$ be exactly as $\Lambda_1^c$, except that the $i$ in round $t^*$ is replaced by $i_b$. By \textbf{(PI)}, $\left\{\langle \Lambda_1^c,i_b\rangle\right\}=\left\{\langle \Lambda_1^d,i\rangle\right\}$.

Notice that the rounds in which $i$ hires in $\Lambda_1^d$ are exactly the same as in $\Lambda_2$, and as a result, \textbf{(P*)} and \textbf{(IA)} imply that the last hire made by $i$ in $\langle \Lambda_1^d,i\rangle$ is the same as in $\langle \Lambda_2,i\rangle$. Not only that, \textbf{(IA)} implies that the set of workers hired in the first $k$ hires are the same, and therefore  $\left\{\langle \Lambda_1^d,i\rangle\right\}=\left\{\langle \Lambda_2,i\rangle\right\}$, implying that $\left\{\langle \Lambda_1,i\rangle\right\}=\left\{\langle \Lambda_2,i\rangle\right\}$, a contradiction.
\end{proof}

Finally, let $\Lambda=\langle (i_1,q_1),(i_2,q_2),\ldots, (i_k,q_k)\rangle$ be any plural sequence of hires and $\Phi^*$ be a rule that satisfies common top, permutation independence, and aggregation independence. By \textbf{(AI)}:

\[\Phi^*\left(W,\Lambda\right)=\Phi^*\left(W,\langle\underbrace{\left(i_1,1\right),\ldots,\left(i_1,1\right)}_{q_1\text{ times}},\ldots, \underbrace{\left(i_k,1\right),\ldots,\left(i_k,1\right)}_{q_k\text{ times}}\rangle\right)\]

That is, aggregation independence implies that each hire from an institution can be split into single hires without changing the workers that are chosen, round by round.\footnote{Notice that the property of aggregation independence holds for any initial matching $\mu$.} Our claim above implies, therefore, that the rule $\Phi^*$ must be single priority, finishing our proof.

$\square$

\end{document}